\newcommand{\Rset}{\mathbb{R}}
\newcommand{\ellip}{\mathsf{Ellip}}
\newcommand{\oracle}{\mathtt{Oracle}}
\newcommand{\calM}{\mathcal{M}}
\definecolor{ocre}{rgb}{0.72,0,0} 
\definecolor{newblue}{rgb}{0.2,0.2,0.6} 
\definecolor{ocre}{rgb}{0.72,0,0} 
\definecolor{babyblueeyes}{rgb}{0.63, 0.79, 0.95}
\definecolor{newgreen}{rgb}{0.53,0.66,0.42} 
\newenvironment{fminipage}%
  {\begin{Sbox}\begin{minipage}}%
  {\end{minipage}\end{Sbox}\fbox{\TheSbox}}
\newcommand{\remove}[1]{}
\newcommand{\R}{\mathbb{R}}
\newcommand{\E}{\mathbb{E}}
\newcommand{\rot}{\intercal}
\newcommand{\ce}{\mathrm{e}}
\newcommand{\tr}{\mathrm{tr}}
\renewcommand{\deg}{\mathrm{deg}}
\newcommand{\eps}{\epsilon}
\newcommand{\cp}{C_{+}}
\newcommand{\cn}{C_{-}}
\newcommand{\cs}{C}
\newcommand{\ca}{C_{|\cdot|}}
\renewcommand{\leq}{\leqslant}
\renewcommand{\geq}{\geqslant}
\newcommand{\thmref}[1]{Theorem~\ref{thm:#1}}
\newcommand{\lemref}[1]{Lemma~\ref{lem:#1}}
\newcommand{\figref}[1]{Figure~\ref{fig:#1}}
\newcommand{\secref}[1]{Section~\ref{sec:#1}}
\newcommand{\eq}[1]{\eqref{eq:#1}}
\newcommand{\Ex}[1]{\mathbb{E} \left[\,#1\,\right]}
\renewcommand{\tilde}{\widetilde}
\renewcommand{\epsilon}{\varepsilon}
\definecolor{ocre}{RGB}{150,22,11} 
\newcommand{\nnz}{\text{nnz}}
\newcommand{\mi}{I}
\global\long\def\ma{A}
\newcommand{\mb}{B}
\newcommand{\mc}{C}
\global\long\def\E{\mathbb{E}}
\global\long\def\mzero{\mathbf{0}}
\newcommand{\mm}{M}
\newtheorem{thm}{Theorem}[section]  
\newtheorem{theorem}[thm]{Theorem}
\newtheorem{lem}[thm]{Lemma}
\newtheorem{lemma}[thm]{Lemma}
\newtheorem{cor}[thm]{Corollary}
\newtheorem{rem}[thm]{Remark}
\newtheorem*{rem*}{Remark}
\newtheorem{pro}[thm]{Proposition}
\newtheorem{defi}[thm]{Definition}
\newtheorem{prob}{Problem}
\renewcommand{\tilde}{\widetilde}
\numberwithin{equation}{section}
\newcommand{\mat}[1]{#1}
\title{\textbf{An SDP-Based Algorithm for\\
 Linear-Sized Spectral Sparsification}}
\author{Yin Tat Lee\\
Microsoft Research\\ Redmond, USA\\
\texttt{yile@microsoft.com}
\and
He Sun\\
The University of Bristol\\
Bristol, UK\\
\texttt{h.sun@bristol.ac.uk}
}
\date{}
\begin{document}

\maketitle

\begin{abstract}

For any undirected and weighted graph $G=(V,E,w)$ with $n$ vertices and $m$ edges, we call a sparse subgraph $H$ of $G$, with proper reweighting of the edges,  a $(1+\varepsilon)$-spectral sparsifier if
\[
(1-\eps)x^{\rot}L_Gx\leq x^{\rot} L_{H} x\leq (1+\varepsilon) x^{\rot} L_Gx
\]
holds for any $x\in\Rset^n$,
where $L_G$ and $L_{H}$ are the respective Laplacian matrices of $G$ and $H$. Noticing that $\Omega(m)$ time is needed for any algorithm to construct a spectral sparsifier and a spectral sparsifier of $G$  requires  $\Omega(n)$ edges, a natural question is to investigate, for any constant $\eps$, if a  $(1+\eps)$-spectral sparsifier of $G$ with $O(n)$ edges can be constructed in $\tilde{O}(m)$ time, where the $\tilde{O}$ notation suppresses polylogarithmic factors. 
All previous constructions on spectral sparsification~\cite{STSS,SS,BSS,Z12,zhu15,LS15} require either super-linear number of edges or $m^{1+\Omega(1)}$ time.

  In this work we answer this question affirmatively by presenting an algorithm that, for 
  any undirected graph $G$ and $\eps>0$, outputs a $(1+\eps)$-spectral sparsifier of $G$ with $O(n/\eps^2)$ edges in $\tilde{O}(m/\eps^{O(1)})$ time. 
Our algorithm is based on three novel techniques: (1) a new potential function which is much easier to compute yet has similar guarantees as the potential functions used in previous references; 
(2) an efficient reduction from a two-sided spectral sparsifier to a one-sided spectral sparsifier; (3) constructing a one-sided spectral sparsifier by a semi-definite program.  
   
\vspace{0.5cm}

 \textbf{Keywords:}  spectral graph theory, spectral sparsification

\end{abstract}

\thispagestyle{empty}

\setcounter{page}{0}

\newpage

\section{Introduction}


A sparse graph is one whose number of edges is reasonably viewed as being proportional to the number of vertices. 
Since most algorithms run faster on sparse instances of graphs and it is more space-efficient to store sparse graphs, 
it is useful to obtain a sparse representation $H$ of $G$ so that certain properties between $G$ and $H$ are preserved, see Figure~\ref{ssfig} for an illustration.
 Over the past three decades, different notions of graph sparsification have been proposed and widely used to design approximation algorithms.  For instance, a \emph{spanner} $H$
 of a graph $G$ is a subgraph of $G$ so that 
the shortest path distance between any pair of vertices is approximately preserved~\cite{Che89}.
Bencz\'{u}r and Karger~\cite{BK96} defined a \emph{cut sparsifier} of a graph $G$ to be a sparse subgraph $H$ such that the  value of any cut between $G$ and $H$ are approximately the same. 
In particular, Spielman and Teng~\cite{STSS} introduced a \emph{spectral sparsifer}, which is a sparse subgraph $H$ of an undirected graph $G$ such that many spectral properties of the Laplacian matrices between $G$ and $H$ are approximately preserved.   
Formally, for any undirected  graph $G$ with $n$ vertices and $m$ edges, we call a subgraph $H$ of $G$, with proper reweighting of the edges,  a $(1+\varepsilon)$-spectral sparsifier if 
\[
(1-\eps)x^{\rot}L_Gx\leq x^{\rot} L_{H} x\leq (1+\varepsilon) x^{\rot} L_Gx
\]
holds for any $x\in\Rset^n$,
where $L_G$ and $L_{H}$ are the respective Laplacian matrices of $G$ and $H$. 
Spectral sparsification has been proven  to be a remarkably useful tool in algorithm design, linear algebra, combinatorial optimisation, machine learning, and network analysis.

\vspace{1em}

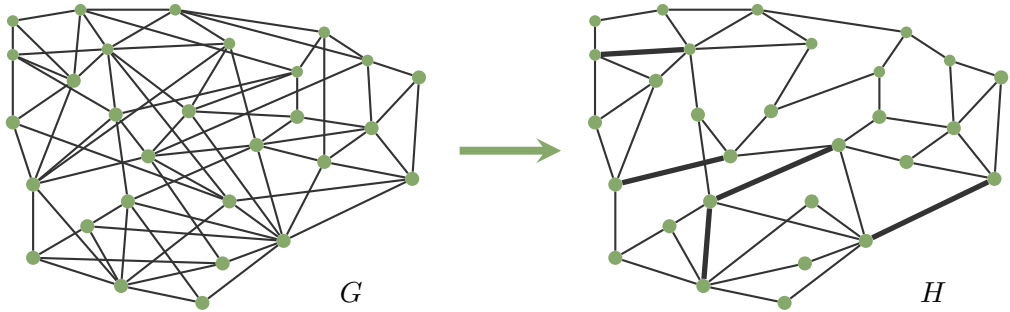
\begin{figure}[h]
\begin{center}
\begin{tikzpicture}[xscale=1.8,yscale=1.5,rounded corners=3pt,knoten/.style={fill=newgreen,color=newgreen,circle,scale=0.23},edge/.style={black!80, thick},tedge/.style={black!80, line width=2pt}]

\draw[-stealth, line width=3pt, color=newgreen] (3.5,2.35) -- (4.25,2.35);

\node[] at (2.7,1.1) {$G$};

\node[knoten] (1) at (0.2,3.5) {$1$};
\node[knoten] (2) at (0.7,3.6) {$2$};
\node[knoten] (3) at (1.4,3.6) {$3$};
\node[knoten] (4) at (2.5,3.4) {$4$};
\node[knoten] (5) at (0.2,3.2) {$5$};
\node[knoten] (6) at (0.9,3.25) {$6$};
\node[knoten] (7) at (1.8,3.3) {$7$};
\node[knoten] (8) at (2.3,3.05) {$8$};
\node[knoten] (9) at (2.82,3.15) {$9$};
\node[knoten] (10) at (3.2,3.0) {$10$};

\node[knoten] (11) at (0.2,2.6) {$11$};

\node[knoten] (12) at (0.65,2.97) {$12$};
\node[knoten] (13) at (0.96,2.67) {$13$};
\node[knoten] (14) at (1.5,2.7) {$14$};
\node[knoten] (15) at (2.3,2.65) {$15$};
\node[knoten] (16) at (2.85,2.55) {$16$};

\node[knoten] (17) at (0.35,2.05) {$17$};
\node[knoten] (18) at (1.2,2.3) {$18$};

\node[knoten] (19) at (2,2.4) {$19$};
\node[knoten] (20) at (2.5,2.25) {$20$};
\node[knoten] (21) at (3.15,2.1) {$21$};

\node[knoten] (22) at (1.05,1.9) {$22$};
\node[knoten] (23) at (1.8,1.9) {$23$};
\node[knoten] (24) at (0.75,1.68) {$24$};

\node[knoten] (25) at (2.2,1.55) {$25$};

\node[knoten] (26) at (0.35,1.4) {$26$};
\node[knoten] (27) at (1,1.15) {$27$};

\node[knoten] (28) at (1.75,1.35) {$28$};
\node[knoten] (29) at (1.6,1) {$29$};

\draw[edge] (1) -- (2);

\draw[edge] (1) -- (5);
\draw[edge] (1) -- (12);
\draw[edge] (2) -- (3);
\draw[edge] (2) -- (8);
\draw[edge] (3) -- (4);
\draw[edge] (3) -- (9);
\draw[edge] (9) -- (10);
\draw[edge] (4) -- (9);
\draw[edge] (3) -- (7);
\draw[edge] (5) -- (6);
\draw[edge] (2) -- (12);
\draw[edge] (2) -- (6);
\draw[edge] (6) -- (7);
\draw[edge] (4) -- (8);
\draw[edge] (5) -- (11);
\draw[edge] (5) -- (12);
\draw[edge] (5) -- (13);
\draw[edge] (11) -- (12);
\draw[edge] (11) -- (17);
\draw[edge] (17) -- (26);
\draw[edge] (13) -- (17);
\draw[edge] (6) -- (12);
\draw[edge] (6) -- (13);
\draw[edge] (6) -- (14);
\draw[edge] (7) -- (14);
\draw[edge] (7) -- (19);
\draw[edge] (14) -- (15);
\draw[edge] (15) -- (16);
\draw[edge] (9) -- (16);
\draw[edge] (10) -- (16);
\draw[edge] (10) -- (21);
\draw[edge] (16) -- (21);
\draw[edge] (20) -- (21);
\draw[edge] (21) -- (25);
\draw[edge] (25) -- (29);
\draw[edge] (25) -- (28);
\draw[edge] (27) -- (29);
\draw[edge] (26) -- (27);
\draw[edge] (27) -- (28);
\draw[edge] (26) -- (28);
\draw[edge] (23) -- (25);
\draw[edge] (21) -- (23);
\draw[edge] (20) -- (25);
\draw[edge] (19) -- (20);

\draw[edge] (24) -- (26);
\draw[edge] (22) -- (24);
\draw[edge] (24) -- (25);
\draw[edge] (11) -- (23);

\draw[edge] (18) -- (23);
\draw[edge] (17) -- (18);
\draw[edge] (18) -- (19);
\draw[edge] (16) -- (19);
\draw[edge] (15) -- (19);
\draw[edge] (19) -- (25);
\draw[edge] (19) -- (22);

\draw[edge] (22) -- (27);
\draw[edge] (13) -- (22);
\draw[edge] (8) -- (13);
\draw[edge] (8) -- (15);
\draw[edge] (9) -- (18);
\draw[edge] (14) -- (18);
\draw[edge] (13) -- (18);
\draw[edge] (7) -- (17);

\draw[edge] (4) -- (20);
\draw[edge] (8) -- (14);
\draw[edge] (17) -- (27);
\draw[edge] (12) -- (17);

\draw[edge] (24) -- (27);
\draw[edge] (18) -- (28);
\draw[edge] (14) -- (25);
\draw[edge] (6) -- (23);
\draw[edge] (22) -- (29);
\draw[edge] (23) -- (27);
\draw[edge] (3) -- (6);
\draw[edge] (16) -- (20);
\draw[edge] (22) -- (25);

\begin{scope}[xshift=4.3cm]

\node[] at (2.7,1.1) {$H$};

\node[knoten] (1) at (0.2,3.5) {$1$};
\node[knoten] (2) at (0.7,3.6) {$2$};
\node[knoten] (3) at (1.4,3.6) {$3$};
\node[knoten] (4) at (2.5,3.4) {$4$};
\node[knoten] (5) at (0.2,3.2) {$5$};
\node[knoten] (6) at (0.9,3.25) {$6$};
\node[knoten] (7) at (1.8,3.3) {$7$};
\node[knoten] (8) at (2.3,3.05) {$8$};
\node[knoten] (9) at (2.82,3.15) {$9$};
\node[knoten] (10) at (3.2,3.0) {$10$};

\node[knoten] (11) at (0.2,2.6) {$11$};

\node[knoten] (12) at (0.65,2.97) {$12$};
\node[knoten] (13) at (0.96,2.67) {$13$};
\node[knoten] (14) at (1.5,2.7) {$14$};
\node[knoten] (15) at (2.3,2.65) {$15$};
\node[knoten] (16) at (2.85,2.55) {$16$};

\node[knoten] (17) at (0.35,2.05) {$17$};
\node[knoten] (18) at (1.2,2.3) {$18$};

\node[knoten] (19) at (2,2.4) {$19$};
\node[knoten] (20) at (2.5,2.25) {$20$};
\node[knoten] (21) at (3.15,2.1) {$21$};

\node[knoten] (22) at (1.05,1.9) {$22$};
\node[knoten] (23) at (1.8,1.9) {$23$};
\node[knoten] (24) at (0.75,1.68) {$24$};

\node[knoten] (25) at (2.2,1.55) {$25$};

\node[knoten] (26) at (0.35,1.4) {$26$};
\node[knoten] (27) at (1,1.15) {$27$};

\node[knoten] (28) at (1.75,1.35) {$28$};
\node[knoten] (29) at (1.6,1) {$29$};

\draw[edge] (1) -- (2);
\draw[edge] (2) -- (3);
\draw[edge] (2) -- (6);
\draw[edge] (3) -- (4);
\draw[edge] (4) -- (9);
\draw[edge] (9) -- (10);
\draw[edge] (9) -- (16);
\draw[edge] (10) -- (16);
\draw[edge] (10) -- (21);
\draw[edge] (16) -- (21);
\draw[edge] (20) -- (21);
\draw[tedge] (21) -- (25);
\draw[edge] (25) -- (29);
\draw[edge] (27) -- (29);
\draw[edge] (26) -- (27);
\draw[edge] (27) -- (28);
\draw[edge] (25) -- (28);
\draw[edge] (23) -- (27);
\draw[edge] (19) -- (25);
\draw[edge] (19) -- (20);

\draw[edge] (23) -- (25);
\draw[tedge] (19) -- (22);
\draw[tedge] (22) -- (27);
\draw[edge] (22) -- (24);
\draw[edge] (24) -- (26);
\draw[tedge] (17) -- (18);
\draw[edge] (13) -- (22);

\draw[edge] (1) -- (5);
\draw[tedge] (5) -- (6);
\draw[edge] (5) -- (11);
\draw[edge] (5) -- (12);
\draw[edge] (6) -- (12);
\draw[edge] (11) -- (17);
\draw[edge] (11) -- (12);
\draw[edge] (12) -- (17);
\draw[edge] (17) -- (26);
\draw[edge] (6) -- (13);
\draw[edge] (6) -- (3);
\draw[edge] (3) -- (7);
\draw[edge] (6) -- (7);
\draw[edge] (7) -- (14);
\draw[edge] (8) -- (14);
\draw[edge] (8) -- (15);
\draw[edge] (15) -- (16);
\draw[edge] (15) -- (19);
\draw[edge] (18) -- (19);
\draw[edge] (13) -- (18);
\draw[edge] (14) -- (18);
\draw[edge] (16) -- (20);
\draw[edge] (4) -- (8);
\draw[edge] (24) -- (27);
\draw[edge] (22) -- (25);
\end{scope}

\end{tikzpicture}
\end{center}

\caption{The graph sparsification is a reweighted subgraph $H$ of an original graph $G$ such that certain properties are preserved. These subgraphs are  sparse, and are more space-efficient to be stored than the original graphs. The picture above uses the thickness of edges in $H$ to represent their weights.
\label{ssfig}}
\end{figure}

In the seminal work on spectral sparsification, Spielman and Teng~\cite{STSS}\ showed that, for any undirected graph $G$ of $n$ vertices, a spectral sparsifier of $G$ with only  $O(n\log^{c}n/\varepsilon^2)$ edges 
exists and can be constructed in nearly-linear time\footnote{We say a graph algorithm runs in nearly-linear time if the algorithm runs in $O(m\cdot \mathrm{poly}\log n)$ time, where $m$ and $n$ are the number of edges and vertices of the input graph.}, where $c\geq 2$ is some constant. Both the runtime of their algorithm  and the number of edges in the output graph involve large poly-logarithmic factors, and this motivates a sequence of  simpler and faster constructions of spectral sparsifiers with fewer edges~\cite{BSS,zhu15,LS15}.  In particular, since any constant-degree expander graph of $O(n)$ edges is a spectral sparsifier of an $n$-vertex complete graph, a natural question is to study, for any $n$-vertex undirected  graph $G$ and constant $\eps>0$, if a $(1+\eps)$-spectral sparsifier of $G$ with $O(n)$ edges can be constructed in nearly-linear time. Being considered as one of the most important open question about spectral sparsification by Batson et al.~\cite{SSSurvey}, there has been many efforts  for fast constructions of linear-sized spectral sparsifiers, e.g. \cite{zhu15,LS15}, however the original problem posed in \cite{SSSurvey} has remained open.

In this work we answer this question affirmatively by 
presenting the first nearly-linear time algorithm for constructing a linear-sized spectral sparsifier. The formal description of our result is as follows:

\begin{thm} \label{thm:maingraph}Let $G$ be any undirected graph with $n$ vertices and $m$ edges. For any $0<\varepsilon<1$, there is an algorithm that runs in $\tilde{O}\left(m/\eps^{O(1)}\right)$ work, $\tilde{O}\left(1/\eps^{O(1)}\right)$ depth, and produces a $(1+\eps)$-spectral sparsifier of $G$ with $O\left(n/\eps^2\right)$ edges\footnote{Here, the notation $\tilde{O}(\cdot)$ hides a factor of $\log^cn$ for some positive constant $c$.}. 
\end{thm}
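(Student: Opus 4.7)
I would follow the three-part architecture announced in the abstract, decomposing the construction into a preprocessing-plus-reduction step, a computable potential function, and an SDP-based batched selection. As a first step, I would run an inexpensive preprocessing pass (for example, Spielman--Srivastava effective-resistance sampling) to reduce the edge set to $\tilde O(n)$ edges that already $O(1)$-approximate $G$ spectrally; all subsequent work then depends on $\tilde O(n)$ rather than $m$ candidate edges. After normalizing by $L_G^{-1/2}$ on both sides, I would reduce the two-sided spectral sparsification to the following one-sided problem: find nonnegative weights $s\in\R^m$ supported on $O(n/\eps^2)$ edges with $\sum_e s_e L_e \succeq (1-\eps)L_G$, while an upper-barrier inequality $\sum_e s_e L_e \preceq (1+\eps)L_G$ is maintained implicitly by a potential-guided selection rule. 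Invoking the one-sided construction twice (once against $L_G$ and once against a dual constraint) then recovers the two-sided guarantee.

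Next I would design new potentials $\Phi_\ell$ and $\Phi_u$ replacing the resolvent barriers $\tr((uI-A)^{-1})$ and $\tr((A-\ell I)^{-1})$ of the Batson--Spielman--Srivastava framework. The design requirements are that $\Phi$ dominates the extreme eigenvalues of the renormalized matrix, that its matrix gradient lies in a form compatible with Laplacian solvers, and that both its value and its gradient are multiplicatively approximable in $\tilde O(m)$ work via Johnson--Lindenstrauss sketching composed with nearly-linear-time Laplacian solvers. A natural candidate is a power-sum of shifted resolvents $\tr((uI-A)^{-p})$ with $p=\poly(1/\eps)$, or a soft-max / entropy-regularized surrogate; the exact choice should make both queries reducible to polylogarithmically many Laplacian-solve calls while still tracking eigenvalue movement tightly enough for a BSS-style invariant.

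Instead of the sequential one-edge-at-a-time BSS loop, which inherently has $\Omega(n)$ depth, I would express the one-sided sparsification as a packing SDP whose objective is the new potential $\Phi$. I would solve this SDP by a matrix mirror-descent / multiplicative-weights scheme in which each outer iteration queries an oracle that (i) approximately evaluates $\Phi$ and its gradient, and (ii) returns a \emph{batch} of edges whose aggregated marginal improvement is a constant-factor approximation to the SDP optimum. Each oracle call costs $\tilde O(m)$ by the primitive of the previous paragraph, only $\tilde O(1/\eps^{O(1)})$ outer iterations are needed, and all internal steps parallelize to $\tilde O(1/\eps^{O(1)})$ depth, yielding the claimed work and depth bounds with at most $O(n/\eps^2)$ edges in the support.

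The main obstacle I anticipate is the interaction between batching, approximation, and the BSS-style invariant. The classical BSS analysis relies on the exact Sherman--Morrison identity for a single rank-one resolvent update; once one commits to batched updates of many edges at once and to an approximate oracle for $\Phi$, one must control the aggregate second-order error and show that it is absorbable into the $\eps$-budget without inflating the edge count beyond $O(n/\eps^2)$. Establishing this near-linear behaviour of the potential under batched updates, and simultaneously guaranteeing that the approximate oracle does not contribute extra logarithmic factors to the sparsifier size, is the technical heart of the argument.
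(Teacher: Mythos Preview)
Your high-level architecture matches the paper's (potential-guided BSS loop, a one-sided subproblem, and a packing SDP for batching), but two of your concrete choices are exactly where prior work got stuck, and the paper departs from them in essential ways.

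\textbf{Potential function.} Your proposed barrier $\tr((uI-A)^{-p})$ with $p=\poly(1/\eps)$ is the potential of Allen-Zhu et al.\ and Lee--Sun. The paper explicitly explains why this does not give nearly-linear time with $O(n/\eps^2)$ sparsity: to keep that potential under control one must impose a \emph{linear} step constraint $\Delta\preceq \tfrac{1}{10p}(uI-A)$, and the $1/p$ factor caps the per-iteration progress, forcing $\Omega(n^{c})$ iterations or extra sparsity. The paper's new potential is $\Phi_u(A)=\tr\exp((uI-A)^{-1})$ (and symmetrically for $\ell$), together with a \emph{non-linear} step constraint $\Delta\preceq \delta(uI-A)^2$. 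Lemma~3.1 shows that under this quadratic constraint the first-order expansion of $\Phi$ is accurate up to $(1\pm 2\delta)$ with $\delta$ independent of any $p$, which is precisely what lets each call to the oracle make $\Omega(\lambda_{\min}(B))$ progress and closes the loop in $\tilde O(1/\eps^2)$ iterations. Your ``soft-max surrogate'' aside does not capture this; the exponential-of-resolvent form and the quadratic step are a coupled innovation, and your writeup shows no awareness of the quadratic step.

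\textbf{Two-sided $\to$ one-sided reduction.} You propose to enforce the lower bound directly and ``maintain the upper bound implicitly,'' then ``invoke the one-sided construction twice.'' The paper does neither. It keeps both barriers $u_j,\ell_j$ in the outer loop and reduces each inner step to a \emph{single} one-sided problem by the direct-sum trick: set $B_j=(u_jI-A_j)^2\oplus(A_j-\ell_jI)^2$ and ask the oracle for $\Delta_j\oplus\Delta_j\preceq B_j$ while maximising $C\bullet(\Delta_j\oplus\Delta_j)$, where $C=C_+-C_-$ encodes the gradient of $\Phi_{u,\ell}$. The one-sided oracle is thus a \emph{linear} packing SDP $\max_{\alpha\ge 0}\ C\bullet\sum_i\alpha_iM_i$ subject to $\sum_i\alpha_iM_i\preceq B$, not an optimisation of $\Phi$ itself as you suggest. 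The non-obvious part, and the one your plan lacks entirely, is the \emph{existence} proof: the paper first runs a one-vector-at-a-time rejection-sampling BSS process (Algorithm~2, Lemma~3.4) to show that a feasible $A$ with $C\bullet A\ge \tfrac{1}{32}\lambda_{\min}(B)\tr(C)$ exists over a random support of size $\lfloor\lambda_{\min}(B)\tr(B^{-1})\rfloor$, and only then observes that the packing SDP over that same random support must do at least as well (Theorem~3.5). Without this existence argument you have no lower bound on the SDP value and hence no control on $\Phi$.

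In short, your plan would reproduce the $m^{1+\Omega(1)}$ bottleneck of \cite{zhu15,LS15}; the paper breaks it with the $\exp(x^{-1})$ potential plus quadratic step, the $\oplus$ reduction, and the random-support existence lemma feeding the packing SDP.
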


\thmref{maingraph}  shows that  a linear-sized spectral sparsifier can be constructed in nearly-linear time in a single machine setting, and in polylogarithmic time in a parallel setting. The same algorithm can be applied to the matrix setting, whose result is summarised as follows:

\begin{thm} \label{thm:mainmatrix}
Given a set of $m$  \textsf{PSD} matrices $\{M_i\}_{i=1}^m$, where $M_i\in\Rset^{n\times n}$. Let $M = \sum_{i=1}^m M_i$ and $Z=\sum_{i=1}^m\mathrm{nnz}(M_i)$, where $\mathrm{nnz}(M_i)$ is the number of non-zero entries in $M_i$.
For any $1>\varepsilon>0$, there is an algorithm that runs in $\tilde{O}\left((Z + n^\omega)/\eps^{O(1)}\right)$ work, $\tilde{O}\left(1/\eps^{O(1)}\right)$ depth and produces a $(1+\eps)$-spectral sparsifier of $M$ with $O\left(n/\eps^2\right)$ components, i.e., there is an non-negative coefficients $\{c_i\}_{i=1}^m$ such that
$
\left| \{ c_i | c_i\neq 0\}  \right| =O\left(n/\eps^2\right)
$, and
\begin{equation}
(1-\eps)\cdot M\preceq 
\sum_{i=1}^m c_i M_i \preceq
(1+\eps)\cdot M.
\end{equation}
Here  $\omega$ is the matrix multiplication constant.
\end{thm}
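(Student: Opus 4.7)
My plan is to reduce \thmref{mainmatrix} to the algorithm underlying \thmref{maingraph} by treating the PSD summands $\{M_i\}$ as abstract analogues of the rank-one edge terms $b_e b_e^{\rot}$ in the Laplacian decomposition. First, I would whiten the problem: compute $M^{-1/2}$ in $\tilde{O}(n^\omega)$ work and $\tilde{O}(1)$ depth via fast matrix inversion, and implicitly replace each $M_i$ by $\widetilde{M}_i = M^{-1/2} M_i M^{-1/2}$, so that $\sum_i \widetilde{M}_i = \Id$. Any non-negative coefficients $\{c_i\}$ with $(1-\eps)\Id \preceq \sum_i c_i \widetilde{M}_i \preceq (1+\eps)\Id$ lift back to the claim because conjugation by $M^{1/2}$ is an isometry of the PSD cone. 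Crucially, the $\widetilde{M}_i$ must never be formed explicitly; the algorithm only needs inner products of the form $\langle X, \widetilde{M}_i\rangle = \langle M^{-1/2} X M^{-1/2}, M_i\rangle$ for certain symmetric matrices $X$, and these can be batched by forming $M^{-1/2} X M^{-1/2}$ once in $\tilde{O}(n^\omega)$ work and then touching each $M_i$ at cost $O(\mathrm{nnz}(M_i))$, yielding $\tilde{O}(Z + n^\omega)$ work per iteration.

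Second, I would replay the three ingredients highlighted in the introduction---the new potential function, the reduction from a two-sided sparsifier to a one-sided sparsifier, and the SDP-based construction of the one-sided sparsifier---with $b_e b_e^{\rot}$ syntactically replaced by $\widetilde{M}_i$. These ingredients are stated at the level of sums of PSD matrices and do not use the rank-one structure in any essential way, so the combinatorial guarantees (at most $O(n/\eps^2)$ surviving components and at most $\tilde{O}(\eps^{-O(1)})$ outer iterations) carry over verbatim. Aggregating the per-iteration cost of $\tilde{O}(Z + n^\omega)$ over the $\tilde{O}(\eps^{-O(1)})$ iterations produces the claimed work bound, and the depth bound is preserved since each iteration consists of parallelizable dense matrix operations and independent inner-product computations against the $M_i$.

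The main obstacle, as I see it, is verifying that every place in the graph algorithm where a scalar identity exploits the rank-one form of $b_e b_e^{\rot}$ admits a faithful matrix analogue. Typical examples are Sherman--Morrison-type updates of a matrix inverse after adding an edge, or the leverage-score evaluation reducing to a single scalar $b_e^{\rot} L^{-1} b_e$; in the PSD setting these become trace-inner-product computations $\langle L^{-1}, \widetilde{M}_i\rangle$ or full low-rank updates $(A + \widetilde{M}_i)^{-1}$. This is exactly why the per-iteration cost rises from nearly-linear (graph case) to $\tilde{O}(n^\omega + Z)$ here: without a Laplacian-style fast solver for a general sum of PSD matrices, the algorithm must rely on dense linear algebra. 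Once these matrix analogues are in place, the barrier/potential analysis goes through unchanged because it references only the spectrum of $\sum_i c_i \widetilde{M}_i$ and its Frobenius inner products with the summands, both of which are well-defined and computable in our budget.
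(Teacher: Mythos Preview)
Your proposal is correct and matches the paper's approach almost exactly: the paper's proof of \thmref{mainmatrix} is a two-line combination of \lemref{reductionlem} (the reduction needs $\tilde{O}(1/\eps^2)$ calls to $\oracle$) and \lemref{oracle_implementation} (each call costs $\tilde{O}((Z+n^\omega)/\eps^{O(1)})$ work and $\tilde{O}(1/\eps^{O(1)})$ depth in the matrix setting), which is precisely the accounting you outline.

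The only comment is that your framing inverts the paper's logical direction. You treat the graph algorithm as primary and worry about whether ``every place \ldots where a scalar identity exploits the rank-one form of $b_e b_e^{\rot}$ admits a faithful matrix analogue.'' In the paper there is nothing to verify: the entire development (Problem~\ref{prob:general}, \lemref{potentialchange}, Algorithm~\ref{algo1}, \lemref{one_sided}, \thmref{sdp}) is written from the outset for arbitrary \textsf{PSD} summands $M_i$ with $\sum_i M_i = I$, and rank-one structure is never invoked---the authors even say explicitly that one may \emph{think of} the $M_i$ as rank one only ``for intuition.'' The graph theorem is the specialization, not the source. So your whitening step $\widetilde{M}_i = M^{-1/2} M_i M^{-1/2}$ is exactly the (standard, \cite{BSS}-attributed) reduction to Problem~\ref{prob:general} the paper assumes, and no Sherman--Morrison-to-Woodbury upgrades are needed anywhere in the analysis; the $n^\omega$ term arises, as you say, because the linear systems inside $\oracle$ no longer have a fast Laplacian solver and fall back to dense linear algebra.
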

\subsection{Related work}


In the seminal paper on spectral sparsification, Spielman and Teng~\cite{STSS}  showed that a spectral sparsifier of any undirected graph $G$ can be constructed by decomposing $G$ into multiple nearly expander graphs, and sparsifying each subgraph individually. This method leads to the first nearly-linear time algorithm for constructing a spectral sparsifier with $O(n\log^{c}n/\varepsilon^2)$ edges for some $c\geq 2$. 
However, both the runtime of their algorithm and the number of edges in the output graph involve large poly-logarithmic factors.  Spielman and Srivastava~\cite{SS} showed that a $(1+\eps)$-spectral sparsifier of $G$  with $O(n\log n/\eps^2)$ edges can be constructed by sampling the edges of $G$ with probability proportional to their effective resistances, which is conceptually much simpler than the algorithm presented in \cite{STSS}.

 Noticing that any constant-degree expander graph of $O(n) $ edges is a spectral sparsifier of an $n$-vertex complete graph, Spielman and Srivastava~\cite{SS} asked 
if any $n$-vertex graph has a spectral sparsifier with $O(n)$ edges.  To answer this question, Batson, Spielman and Srivastava~\cite{BSS} presented a polynomial-time algorithm that, for any undirected graph $G$ of $n$ vertices, produces a spectral sparsifier of $G$ with $O(n)$ edges.  At a high level, their algorithm, a.k.a. the \textsf{BSS algorithm}, proceeds for $O(n)$ iterations, and in each iteration \emph{one edge} is chosen deterministically to ``optimise'' the change of some potential function. Allen-Zhu et al.~\cite{zhu15} noticed that a less ``optimal" edge, based on a different potential function, can be found in almost-linear time and this leads to an almost-quadratic time algorithm. Generalising their techniques, Lee and Sun~\cite{LS15} showed that a linear-sized spectral sparsifier can be constructed in time $O\left(m^{1+c}\right)$ for an arbitrary small constant $c$.
All  of these algorithms proceed for $\Omega(n^c)$ iterations, and
every iteration takes  $\Omega(m^{1+c})$ time for some constant $c>0$.  Hence, to break the $\Omega(m^{1+c})$ runtime barrier faced in all previous constructions multiple new techniques are needed.


\subsection{Organisation}

The remaining part of the paper is organised as follows. We  introduce necessary notions about matrices and graphs in  \secref{pre}. In \secref{overview} we  overview our algorithm and proof techniques.  For readability, more detailed discussions and technical proofs 
 are deferred to \secref{details}.
\section{Preliminaries\label{sec:pre}}

\subsection{Matrices}

For any $n\times n$ real and symmetric matrix $A$, let $\lambda_{\min}(A)=\lambda_1(A)\leq \cdots\leq \lambda_n(A)=\lambda_{\max}(A)$ be the eigenvalues of $A$,  where $\lambda_{\min}(A)$ and $\lambda_{\max}(A)$ represent the minimum and maximum eigenvalues of $A$. 
We call a matrix $A$  positive semi-definite~(\textsf{PSD}) if  $x^{\rot}Ax\geq 0$ holds for any $x\in\mathbb{R}^n$, and a matrix $A$ positive definite  if  $x^{\rot}Ax> 0$ holds for any $x\in\mathbb{R}^n$.  For any positive definite matrix $A$, we define the corresponding ellipsoid by
$
\mathsf{Ellip}(A)\triangleq\left\{ x :~ x^{\rot}A^{-1}x\leq 1 \right\}.
$ 

\subsection{Graph Laplacian}

Let
  $G=(V,E,w)$ be a connected,   undirected and weighted  graph with $n$ vertices, $m$ edges, and weight function $w: E\rightarrow \mathbb{R}_{\geq 0}$. We fix an arbitrary orientation of the edges in $G$, and let $B\in\Rset^{m\times n}$ be the \emph{signed edge-vertex incidence matrix} defined by 
 \begin{equation} \nonumber
B_G(e,v)=\left\{ \begin{aligned}
         1 & \qquad \mbox{if $v$ is $e$'s head,} \\
         -1 & \qquad \mbox{if $v$ is $e$'s tail,} \\
                 0&\qquad \mbox{otherwise.}
                          \end{aligned} \right.
                          \end{equation}
We  define an $m\times m$ diagonal matrix $W_G$ by $W_G(e,e)=w_e$ for any edge $e\in E[G]$.

 The Laplacian matrix of $G$ is an $n\times n$ matrix $L$ defined by
\begin{equation} \nonumber
L_G(u,v)=\left\{ \begin{aligned}
         -w(u,v) & \qquad \mbox{if $u\sim v$,} \\
         \deg(u) & \qquad \mbox{if $u=v$,} \\
                 0&\qquad \mbox{otherwise,}
                          \end{aligned} \right.
                          \end{equation}
where $\deg(v)=\sum_{u\sim v} w(u,v)$.
It is easy to verify 
 that \[
x^{\rot}L_G x = x^{\rot}B_G^{\rot}W_GB_Gx =\sum_{u\sim v} w_{u,v}(x_u-x_v)^2 \geq 0
\] holds
for any $x\in \mathbb{R}^n$. Hence, the Laplacian matrix of any undirected graph is a \textsf{PSD} matrix.  
Notice that, by setting $x_u=1$ if $u\in S$ and $x_u=0$ otherwise, $x^{\rot}L_Gx$ equals to the value of the cut between $S$ and $V\setminus S$. Hence, a spectral sparsifier is a stronger notion than a cut sparsifer.


%

\subsection{Other notations}
For any sequence $\{\alpha_i\}_{i=1}^m$, we use $\mathrm{nnz}(\alpha)$ to denote the number of non-zeros in $\{\alpha_i\}_{i=1}^m$.
For any two matrices $A$ and $B$, we write $A\preceq B$ to represent $B-A$ is \textsf{PSD}, and $A\prec B$ to represent $B-A$
 is positive definite.  For any two matrices $A$ and $B$ of the same dimension, let $A\bullet B\triangleq \tr\left(A^{\rot}B\right)$, and
 \[
 A\oplus B =
 \left[ \begin{array}{cc}
A & \mathbf{0}  \\
\mathbf{0} & B \\
 \end{array} \right].
\]

\section{Overview of our algorithm\label{sec:overview}}

Without loss of generality we study the problem of sparsifying the sum of \textsf{PSD} matrices. The one-to-one correspondence between the
construction of a graph sparsifier and the following Problem~\ref{prob:general} was presented in \cite{BSS}.

\begin{prob}\label{prob:general}
Given a set $S$ of $m$ \textsf{PSD} matrices 
 $M_1,\cdots,M_m$ with $\sum_{i=1}^m M_i=I$ and $0<\eps<1$, find non-negative coefficients $\{c_i\}_{i=1}^m$ such that
$
\left| \{ c_i | c_i\neq 0\}  \right| =O\left(n/\eps^2\right)
$, and
\begin{equation}\label{eq:sscondition}
(1-\eps)\cdot I\preceq 
\sum_{i=1}^m c_i M_i \preceq
(1+\eps)\cdot I.
\end{equation}
\end{prob}

For intuition, one can think all $M_i$ are rank-1 matrices, i.e., $M_i = v_{i} v^{\rot}_{i}$ for some $v_i\in\Rset^n$.
Given the correspondence between \textsf{PSD} matrices and ellipsoids, Problem~\ref{prob:general} essentially asks to use $O(n/\eps^2)$ vectors from $S$ to construct an ellipsoid, whose \emph{shape} is close to be  a sphere.
To construct such an ellipsoid with desired shape, all previous algorithms~\cite{BSS,zhu15, LS15} proceed by iterations:  in each iteration $j$ the
algorithm chooses  one or more vectors, denoted by $v_{j_1},\cdots, v_{j_k}$, and adds $\Delta_j\triangleq\sum_{t=1}^{k} v_{j_t} v^{\rot}_{j_t}$ to the 
 currently constructed matrix by setting $A_j=A_{j-1}+\Delta_j$.  To control the shape of the constructed ellipsoid,  two barrier values, the \emph{upper barrier} $u_j$ and the \emph{lower barrier} $\ell_j$, are maintained  such that  the constructed ellipsoid $\ellip(A_j)$ 
is  sandwiched between  the \emph{outer} sphere $u_j\cdot I$ and the  \emph{inner} sphere $\ell_j\cdot I$ for any iteration $j$. That is, the following invariant always maintains:
\begin{equation}\label{eq:invariant}
\ell_j\cdot I\prec A_j\prec  u_j\cdot I.
\end{equation}
To ensure  \eq{invariant} holds, two barrier values $\ell_j$ and $u_j$ are increased properly after each iteration, i.e., 
\[
u_{j+1}= u_j +\delta_{u,j}, \qquad \ell_{j+1} = \ell_j + \delta_{\ell,j}
 \]
 for some positive values $\delta_{u,j}$ and $\delta_{\ell,j}$. The algorithm  continues this process, until  after $T$ iterations $\ellip(A_T)$ is close to be a sphere. This implies that $A_T$ is a solution of  
 Problem~\ref{prob:general}, see \figref{bsspic} for an illustration.
  \begin{figure}[h]
\begin{center}
\begin{tikzpicture}[xscale=1.5,yscale=1.5,rounded corners=3pt,
rknoten/.style={fill=ocre!70,color=ocre!70,draw=ocre!70,circle,scale=0.35},edge/.style={black, thick},
gknoten/.style={fill=newgreen,color=newgreen,draw=newgreen,circle,scale=0.45},edge/.style={black, thick},
bknoten/.style={fill=newblue,color=newblue,draw=newblue,circle,scale=0.35},edge/.style={black, thick},
redge/.style={draw=ocre!70,thick},
gedge/.style={draw=newgreen,thick},
bedge/.style={draw=newblue,thick},
edge/.style={black, thick}]

\draw[-stealth, color=newgreen, line width=3pt] (1.8,0.6) -- (2.6,0.6);

\node () at (1,-0.6) {
\footnotesize{\textbf{Iteration $j$}
}};

\node () at (3.7,-0.6) {
\footnotesize{\textbf{Iteration $j+1$
}}};

\node () at (8.1,-0.6) {
\footnotesize{\textbf{Final iteration $T$
}}};

\draw[-stealth, color=newgreen, line width=3pt] (4.6,0.6) -- (5.4,0.6);

\node[gknoten] (1) at (5.6,0.6){};

\node[gknoten] (1) at (5.8,0.6){};

\node[gknoten] (1) at (6,0.6){};

\draw[-stealth, color=newgreen, line width=3pt] (6.2,0.6) -- (7,0.6);

\begin{scope}[xshift=-0.9cm, yshift=-0.3cm]

\begin{scope}[xshift=0.3cm]

\draw[draw=newgreen!90, fill=newgreen!30, very thick] (1.5,1) ellipse (.65cm and .65cm);

\draw[rotate around={15:(1.5,1)}, draw=gray, fill=gray!80,  very thick] (1.5,1) ellipse (.35cm and .6cm);
\end{scope}

\begin{scope}[xshift=0.3cm]
\draw[draw=gray!90, fill=gray!30, very thick] (1.5,1) ellipse (.3cm and .3cm);
\end{scope}

\end{scope}

\begin{scope}[xshift=1.8cm, yshift=-0.3cm]

\begin{scope}[xshift=0.3cm]

\draw[draw=newgreen!90, fill=newgreen!30, very thick](1.5,1) ellipse (.75cm and .75cm);

\draw[rotate around={5:(1.5,1)}, draw=gray, fill=gray!80, very thick] (1.5,1) ellipse (.55cm and .69cm);

\end{scope}

\begin{scope}[xshift=0.3cm]
\draw[draw=gray!90, fill=gray!30, very thick] (1.5,1) ellipse (.45cm and .45cm);
\end{scope}

\end{scope}

\begin{scope}[xshift=6.3cm, yshift=-0.3cm]

\begin{scope}[xshift=0.3cm]

\draw[draw=newgreen!90, fill=newgreen!30, very thick] (1.5,1) ellipse (1cm and 1cm);

\draw[rotate around={15:(1.5,1)}, draw=gray, fill=gray!80, opacity=0.9,very thick] (1.5,1) ellipse (.85cm and .95cm);

\end{scope}

\begin{scope}[xshift=0.3cm]
\draw[draw=gray!90, fill=gray!30, very thick](1.5,1) ellipse (.8cm and .8cm);
\end{scope}

\end{scope}

\end{tikzpicture}
\end{center}

\caption{Illustration of the algorithms for constructing a linear-sized spectral sparsifier. Here, the light grey and green circles in  iteration $j$ represent the spheres $\ell_j\cdot I$ and $u_j\cdot I$, and the dark grey ellipse sandwiched between the two circles corresponds to the constructed ellipsoid in iteration $j$. After each iteration $j$,  the algorithm increases the value of $\ell_j$ and $u_j$
by some $\delta_{\ell,j}$ and $\delta_{u,j}$
 so that the invariant \eq{invariant}  holds in iteration $j+1$. 
This process is repeated for $T$ iterations, so that the final constructed ellipsoid  is close to be a sphere.
\label{fig:bsspic}}
\end{figure}
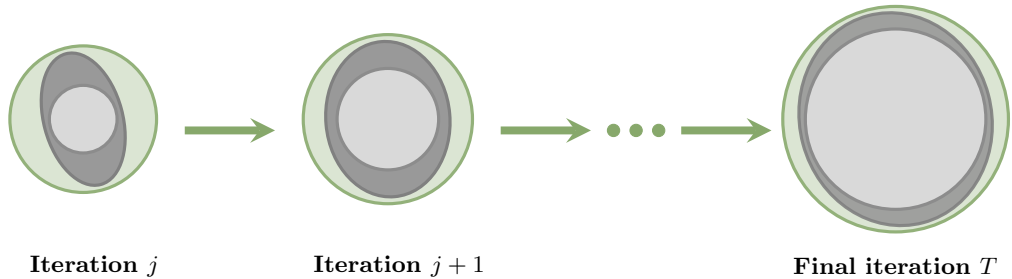

However, turning the scheme  described above into an efficient algorithm 
we need to consider the following  issues:
\begin{itemize}
\item Which vectors  should we pick in each iteration?
\item How many vectors can be added in each iteration? 
\item How should we update $u_j$ and $\ell_j$ properly so that the invariant \eq{invariant} always holds? 
\end{itemize}
These three questions closely relate to each other: on  one hand, one can always  pick a single ``optimal'' vector in each iteration based on some metric, and such conservative approach requires a linear number of iterations $T=\Omega(n/\eps^2)$ and
super-quadric time for each iteration. On the other hand, one can choose multiple less ``optimal" vectors to construct $\Delta_j$ in iteration $j$, but this makes the update of barrier values more challenging to ensure the invariant \eq{invariant} holds. 
Indeed, the previous constructions~\cite{zhu15, LS15} speed up their algorithms at the cost of increasing the sparsity, i.e., the number of edges in a sparsifier, by more than a multiplicative constant.

%

To address these, we introduce three novel techniques for constructing a spectral sparsifier:
First of all, we define a new potential function which is much easier to compute yet has similar guarantee as the potential function introduced in~\cite{BSS}.
Secondly we show that solving Problem~\ref{prob:general} with two-sided constraints in \eq{sscondition} can be reduced to a similar problem  with only one-sided constraints.  Thirdly we prove that the problem with one-sided constraints can be solved by a semi-definite program.

\subsection{A new potential function\label{sec:potential}}

To ensure that the constructed ellipsoid $A$ is always inside the outer sphere $u\cdot I$, we introduce a potential function $\Phi_u(A)$ defined by 
\[
\Phi_u(A)\triangleq \tr\exp\left((uI-A)^{-1}\right)=\sum_{i=1}^n \exp\left( \frac{1}{u-\lambda_i(A)} \right).
\]
It is easy to see that, when  $\ellip(A)$ gets closer to the outer sphere, $\lambda_i(u\cdot I-A)$ becomes smaller and the value of  $\Phi_u(A)$ increases. Hence, a bounded value of $\Phi_u(A)$ ensures that $\ellip(A)$ is inside the sphere $u\cdot I$. For the same reason, we introduce a potential function $\Phi_{\ell}(A)$
defined by 
\[
\Phi_{\ell}(A)\triangleq\tr\exp\left((A-\ell I)^{-1}\right)=\sum_{i=1}^n \exp\left( \frac{1}{\lambda_i(A)-\ell} \right).
\]
to ensure that the inner sphere is always inside $\ellip(A)$. We also define 
\begin{equation}\label{eq:defpotential}
\Phi_{u,\ell}(A)\triangleq\Phi_{u}(A) + \Phi_{\ell}(A),
\end{equation}
as a bounded value of $\Phi_{u,\ell}(A)$
 implies that the two events occur simultaneously. Our goal is to design a proper update rule 
 to construct $\{A_j\}$ inductively, so that $\Phi_{u_j, \ell_j}(A_j)$ is monotone non-increasing after each iteration. Assuming this, a bounded value of the initial potential function guarantees that the invariant \eq{invariant} always holds.
 
 To analyse the change of the potential function, we first notice that
 \[
\Phi_{u,\ell}(\ma+\Delta)  \geq   \Phi_{u,\ell}(\ma) + \tr\left( \mathrm{e}^{(u\mi-\ma)^{-1}} (u\mi-\ma)^{-2}\Delta\right) - \tr\left(\mathrm{e}^{(\ma-\ell\mi)^{-1}}(\ma-\ell\mi)^{-2}\Delta\right)
\]
by the convexity of the function $\Phi_{u,\ell}$.
We  prove that, as long as the matrix $\Delta$ satisfies $0\preceq\Delta\preceq\delta(uI-\ma)^{2}$ and $0\preceq\Delta\preceq\delta(\ma-\ell I)^{2}$ for some small $\delta$, the first-order approximation gives  a good approximation.

\begin{lem}\label{lem:potentialchange}
Let $A$ be a symmetric matrix. Let $u,\ell$ be the barrier values such that $u-\ell\leq1$ and $\ell\mi\prec\ma\prec u\mi$.
Assume
that $\Delta\succ 0$,
 $\Delta\preceq\delta(uI-\ma)^{2}$ and $\Delta\preceq\delta(\ma-\ell I)^{2}$
for $\delta\leq 1/10$. Then, it holds that
\begin{align*}
\Phi_{u,\ell}(\ma+\Delta)  \leq  & \Phi_{u,\ell}(\ma) + (1+2\delta)\tr\left( \mathrm{e}^{(u\mi-\ma)^{-1}} (u\mi-\ma)^{-2}\Delta\right)\\
 &  \qquad -(1-2\delta)\tr\left(\mathrm{e}^{(\ma-\ell\mi)^{-1}}(\ma-\ell\mi)^{-2}\Delta\right).
\end{align*}
\end{lem}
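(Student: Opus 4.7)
The plan is to prove the bound for $\Phi_u(A+\Delta)-\Phi_u(A)$ and for $\Phi_\ell(A+\Delta)-\Phi_\ell(A)$ separately, then add them. Set $U=uI-A$ and $L=A-\ell I$. The conditions $u-\ell\le 1$ and $\ell I\prec A\prec uI$ give $0\prec U,L\prec I$. Since $U^2\preceq U$ and $L^2\preceq L$ under this regime, the hypotheses $\Delta\preceq\delta U^2$ and $\Delta\preceq\delta L^2$ strengthen to $\Delta\preceq\delta U$ and $\Delta\preceq\delta L$. This yields the sandwiches $(1-t\delta)U\preceq U-t\Delta$ and $L+t\Delta\preceq(1+t\delta)L$ for $t\in[0,1]$, so that $(U-t\Delta)^{-1}\preceq (1-t\delta)^{-1}U^{-1}$ and $(L+t\Delta)^{-1}\succeq (1+t\delta)^{-1}L^{-1}$.

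Consider the paths $\phi(t):=\Phi_u(A+t\Delta)=\tr\exp((U-t\Delta)^{-1})$ and $\psi(t):=\Phi_\ell(A+t\Delta)=\tr\exp((L+t\Delta)^{-1})$. The matrix chain rule together with cyclicity of the trace yields
\begin{align*}
\phi'(t) &= \tr\bigl(e^{(U-t\Delta)^{-1}}(U-t\Delta)^{-2}\Delta\bigr) \;=:\; h(t),\\
\psi'(t) &= -\tr\bigl(e^{(L+t\Delta)^{-1}}(L+t\Delta)^{-2}\Delta\bigr) \;=:\; -k(t).
\end{align*}
Since $X\mapsto X^{-1}$ is operator convex on the positive cone and $\tr\exp$ is convex and Löwner-monotone, both $\phi$ and $\psi$ are convex in $t$; hence $h$ is non-decreasing and $k$ is non-increasing. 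The lemma reduces to the two integral inequalities $\int_0^1 h(t)\,dt\le(1+2\delta)h(0)$ and $\int_0^1 k(t)\,dt\ge(1-2\delta)k(0)$.

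A purely convexity-based bound is \emph{too weak}: $\int_0^1 h\le h(1)$ combined with $h(1)/h(0)=e^{\delta/(1-\delta U)}/(1-\delta U)^2$ in the commuting scalar case already exceeds $1+2\delta$ at $\delta=1/10$, $U=1$. The integrals must therefore be evaluated directly. Using the antiderivative $\tfrac{d}{ds}(-e^{1/s})=e^{1/s}/s^2$ together with the substitutions $s=1\mp t\delta U$ (resp.\ $s=1+t\delta L$) and $r=1/s$, the commuting/scalar case yields the explicit identities
\[
\int_0^1\frac{h(t)}{h(0)}\,dt=\frac{1}{\delta}\bigl(e^{\delta/(1-\delta U)}-1\bigr),\qquad \int_0^1\frac{k(t)}{k(0)}\,dt=\frac{1}{\delta}\bigl(1-e^{-\delta/(1+\delta L)}\bigr).
\]
Both expressions are monotone in their single variable and attain their extreme values at $U,L=1$. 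A short Taylor expansion then shows the first is bounded above by $1+2\delta$ (equivalent to $e^{\delta/(1-\delta)}\le 1+\delta+2\delta^2$) and the second from below by $1-2\delta$, provided $\delta\le 1/10$.

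The main obstacle is lifting these scalar integrations to the non-commuting matrix setting. Introducing $K:=U^{-1/2}\Delta U^{-1/2}$, the sandwich gives $0\preceq K\preceq \delta I$ and $\Delta=U^{1/2}KU^{1/2}$. Expanding $(U-t\Delta)^{-1}=U^{-1/2}(I-tK)^{-1}U^{-1/2}$ through the Neumann series and applying cyclicity of trace, one can rewrite $h(t)$ as a trace in which the factor $\Delta=U^{1/2}KU^{1/2}$ absorbs the change of basis; after diagonalising $K$, the trace becomes a non-negative spectral combination (weighted by the eigenvalues of $K$ and by the diagonal entries of $U$ in that basis) of scalar integrands of precisely the form bounded above, so the scalar bound transfers eigenvalue-wise. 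The argument for $\psi$, via the analogous $K_L=L^{-1/2}\Delta L^{-1/2}$ with $0\preceq K_L\preceq\delta I$, is identical with signs reversed, and the two bounds sum to the claim.
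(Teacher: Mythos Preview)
Your approach via the integral $\int_0^1 h(t)\,dt$ and the scalar computation is natural, but the final paragraph contains a genuine gap: the ``lifting to the non-commuting setting'' does not go through as described.

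After the substitution $K=U^{-1/2}\Delta U^{-1/2}$ you correctly have $(U-t\Delta)^{-1}=U^{-1/2}(I-tK)^{-1}U^{-1/2}$, but this means
\[
h(t)=\tr\Bigl(\exp\!\bigl(U^{-1/2}(I-tK)^{-1}U^{-1/2}\bigr)\cdot U^{-1/2}(I-tK)^{-1}U^{-1}(I-tK)^{-1}\,K\,\Bigr)
\]
after cyclicity. The matrix inside the exponential still involves \emph{both} $U^{-1/2}$ and $K$ in a genuinely non-commuting way; diagonalising $K$ leaves $U$ non-diagonal in that basis, and the exponential does not factor or decompose into a positive spectral combination of the scalar integrands you analysed. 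The claim that ``the factor $\Delta=U^{1/2}KU^{1/2}$ absorbs the change of basis'' is simply not true here: cyclicity lets you move $U^{1/2}$ around the trace, but it cannot pull $U^{-1/2}$ out of the \emph{argument} of $\exp(\cdot)$. So the eigenvalue-wise transfer of the scalar bound is unjustified, and in fact this is exactly the obstruction that forces a non-trivial trace inequality.

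The paper handles precisely this point with the Golden--Thompson inequality. One first bounds the operator $(U-\Delta)^{-1}\preceq U^{-1}+\tfrac{1}{1-\delta}\,U^{-1}\Delta U^{-1}$ (using $U^{-1/2}\Delta U^{-1/2}\preceq\delta I$, which follows from $\Delta\preceq\delta U^2$ and $U\preceq I$), then applies monotonicity of $\tr\exp$ together with Golden--Thompson to split
\[
\tr\exp\!\bigl((U-\Delta)^{-1}\bigr)\le \tr\Bigl(e^{U^{-1}}\cdot \exp\!\bigl(\tfrac{1}{1-\delta}U^{-1}\Delta U^{-1}\bigr)\Bigr),
\]
after which the second exponential, having spectrum in $[0,\delta/(1-\delta)]$, is bounded linearly by $I+(1+2\delta)U^{-1}\Delta U^{-1}$. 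Golden--Thompson is the step that decouples the non-commuting pieces; without it (or a substitute such as Lieb's concavity), your integral comparison $\int_0^1 h\le(1+2\delta)h(0)$ does not follow from the scalar identity.
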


We remark that this is not the first paper to use a potential function to guide the growth of the ellipsoid. In~\cite{BSS}, the potential function
\begin{equation}\label{eq:oldpotential}
\Lambda_{u,\ell,p}(\ma) = \tr\left((uI-A)^{-p}\right) + \tr\left((A-\ell I)^{-p}\right)
\end{equation}
is used
with $p= 1$. The main drawback is that $\Lambda_{u,\ell,1}$ does not differentiate the
following two cases: 
\begin{itemize}
\item Multiple eigenvalues of $A$ are close to the boundary (both $u$
and $\ell$).
\item One of the eigenvalues of $A$ is very close to the boundary (either
$u$ or $\ell$).
\end{itemize}
It is known that, when one of the eigenvalues of $A$ is very close to the boundary, it is more difficult to find 
an ``optimal'' vector. 
\cite{zhu15} shows this problem can be alleviated by using $p \gg 1$. 
However, this choice of $p$ makes the function $\Lambda_{u,\ell,p}$ less smooth and hence one has to take a smaller step size $\delta$, as shown in the following lemma by \cite{zhu15}.

\begin{lem}[\cite{zhu15}]
Let $A$ be a symmetric matrix and $\Delta$ be a rank-1 matrix. Let $u,\ell$ be the barrier values such that $\ell\mi\prec\ma\prec u\mi$.
Assume that $\Delta\succ 0$, $\Delta\preceq\delta(uI-\ma)$ and $\Delta\preceq\delta(\ma-\ell I)$
for $\delta\leq 1/(10 p)$ and $p \geq 10$. Then, it holds that 
\begin{align*}
\Lambda_{u,\ell,p}(\ma+\Delta) \leq  & \Lambda_{u,\ell,p}(\ma) + p (1+p \delta)\tr\left((u\mi-\ma)^{-(p+1)}\Delta\right)- p(1-p \delta)\tr\left((\ma-\ell\mi)^{-(p+1)}\Delta\right).
\end{align*}
\end{lem}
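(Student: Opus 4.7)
The plan is to decompose $\Lambda_{u,\ell,p}(A+\Delta)$ into its upper- and lower-barrier contributions and bound each separately. Set $U \triangleq uI - A$ and $L \triangleq A - \ell I$, both positive definite by hypothesis. Up to symmetry the task reduces to bounding $\tr((U-\Delta)^{-p})$ from above and $\tr((L+\Delta)^{-p})$ from above in terms of the respective first-order expansions.

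For the upper barrier, introduce the one-parameter family $f(t) \triangleq \tr((U - t\Delta)^{-p})$ for $t \in [0,1]$. Standard matrix calculus gives $f'(t) = p\tr((U-t\Delta)^{-(p+1)}\Delta)$, so
\[
\tr((U-\Delta)^{-p}) - \tr(U^{-p}) = p\int_0^1 \tr((U-t\Delta)^{-(p+1)}\Delta)\,dt.
\]
The crux is a pointwise bound $\tr((U-t\Delta)^{-(p+1)}\Delta) \leq (1+p\delta)\tr(U^{-(p+1)}\Delta)$ for $t \in [0,1]$. Since $\Delta = vv^{\rot}$ is rank-1 by hypothesis, this reduces to a scalar inequality between the quadratic forms $v^{\rot}(U-tvv^{\rot})^{-(p+1)}v$ and $v^{\rot} U^{-(p+1)}v$. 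Writing $\alpha_k \triangleq v^{\rot} U^{-k}v$ (so that $\alpha_1 \leq \delta$ because $\Delta \preceq \delta U$), I would iterate the Sherman--Morrison identity
\[
(U - tvv^{\rot})^{-1} = U^{-1} + \frac{t}{1 - t\alpha_1}\, U^{-1}vv^{\rot} U^{-1}
\]
and use the Cauchy--Schwarz inequality $\alpha_{k+1}^2 \leq \alpha_k\alpha_{k+2}$ on the moment sequence $\{\alpha_k\}$ to prove by induction on $k$ the closed form
\[
v^{\rot}(U - tvv^{\rot})^{-k}v \;\leq\; \frac{\alpha_k}{(1 - t\alpha_1)^k}.
\]
Since $t\alpha_1 \leq \delta \leq 1/(10p)$ and $p \geq 10$, a short Taylor-expansion argument for $-\log(1-t\alpha_1)$ then yields $(1-t\alpha_1)^{-(p+1)} \leq 1 + p\delta$, completing the upper-barrier bound after integrating in $t$.

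The lower barrier is handled by the mirror image. For $g(t) \triangleq \tr((L + t\Delta)^{-p})$ one has $g'(t) = -p\tr((L+t\Delta)^{-(p+1)}\Delta)$, and the analogous rank-1 identity
\[
(L + tvv^{\rot})^{-1} = L^{-1} - \frac{t}{1 + t\gamma_1}\,L^{-1}vv^{\rot} L^{-1}, \qquad \gamma_k \triangleq v^{\rot} L^{-k}v,\ \gamma_1 \leq \delta,
\]
combined with the same Cauchy--Schwarz telescoping, gives the matching lower estimate $v^{\rot}(L + tvv^{\rot})^{-k}v \geq \gamma_k/(1 + t\gamma_1)^k$. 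Under the same parameter conditions this translates into $(1+t\gamma_1)^{-(p+1)} \geq 1 - p\delta$, yielding $g(1)-g(0) \leq -p(1-p\delta)\tr(L^{-(p+1)}\Delta)$. Summing the two barrier contributions delivers the claimed inequality for $\Lambda_{u,\ell,p}(A+\Delta)$.

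\textbf{Main obstacle.} The hardest step is the inductive moment-ratio bound $v^{\rot}(U-tvv^{\rot})^{-k}v \leq \alpha_k/(1-t\alpha_1)^k$. Repeated application of Sherman--Morrison generates cross terms of the form $t\alpha_j^2/(1-t\alpha_1)^{k+1}$ for intermediate $j$; showing that these always collapse back into the ``clean'' form $\alpha_k/(1-t\alpha_1)^k$ via the Cauchy--Schwarz relations $\alpha_{j+1}^2 \leq \alpha_j\alpha_{j+2}$ requires careful bookkeeping uniformly in $k$ up to $p+1$. A secondary and more routine obstacle is the conversion $(1-\delta)^{-(p+1)} \leq 1 + p\delta$: this is where the hypothesis $p \geq 10$ is genuinely used, letting the $(p+1)$-vs-$p$ mismatch together with the higher-order terms in $-\log(1-\delta)$ be absorbed into a single factor of $p\delta$.
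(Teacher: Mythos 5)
The paper does not prove this lemma; it imports it verbatim from~\cite{zhu15} solely to contrast $\Lambda_{u,\ell,p}$ with the new potential $\Phi_{u,\ell}$, so there is no in-paper proof to compare against. Your overall strategy---differentiate in $t$ to obtain the integral representation $\tr\left((U-\Delta)^{-p}\right)-\tr\left(U^{-p}\right)=p\int_0^1\tr\left((U-t\Delta)^{-(p+1)}\Delta\right)\,dt$, exploit the rank-1 hypothesis via Sherman--Morrison, and control moments $\alpha_k=v^{\rot}U^{-k}v$ by log-convexity---is a legitimate route and matches the style of argument in the literature.

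However, there is a concrete error in the conversion step. The pointwise inequality
\[
(1-t\alpha_1)^{-(p+1)}\ \leq\ 1+p\delta
\]
you invoke is false: for any $x>0$, $(1-x)^{-(p+1)}>1+(p+1)x$, so taking $t=1$ and $\alpha_1=\delta$ already gives $(1-\delta)^{-(p+1)}>1+(p+1)\delta>1+p\delta$. (Numerically, $p=10$, $\delta=0.01$: $(0.99)^{-11}\approx 1.117>1.1$.) The mirror claim $(1+t\gamma_1)^{-(p+1)}\geq 1-p\delta$ on the lower barrier fails for the same reason. The fix is to integrate first and only then estimate: with the moment bound $v^{\rot}(U-tvv^{\rot})^{-(p+1)}v\leq \alpha_{p+1}(1-t\alpha_1)^{-(p+1)}$ one gets
\[
p\int_0^1\frac{\alpha_{p+1}}{(1-t\alpha_1)^{p+1}}\,dt\;=\;\alpha_{p+1}\cdot\frac{(1-\alpha_1)^{-p}-1}{\alpha_1},
\]
and the quantity $\frac{(1-\alpha_1)^{-p}-1}{p\alpha_1}$ \emph{is} bounded by $1+p\delta$ for $0<\alpha_1\leq\delta\leq 1/(10p)$ and $p\geq 10$ (the monotonicity of $x\mapsto\frac{(1-x)^{-p}-1}{px}$ reduces it to $x=\delta$, where a two-term Taylor estimate of $-\log(1-\delta)$ suffices). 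The cancellation that saves you happens only under the integral sign, so the pointwise phrasing must go.

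A secondary caution: the moment-ratio bound $v^{\rot}(U-tvv^{\rot})^{-k}v\leq\alpha_k/(1-t\alpha_1)^k$ does hold, but the ``clean'' telescoping you suggest is more delicate than one Cauchy--Schwarz step per level. Already at $k=4$ the expansion produces a $t^2$-coefficient $\alpha_1^2\alpha_4-2\alpha_1\alpha_2\alpha_3+\alpha_2^3$ whose sign is not fixed by the two-term log-convexity relations; one has to show that the negative $t$-term $2t(\alpha_2\alpha_3-\alpha_1\alpha_4)$ dominates it on $t\in[0,1]$, or pass to a generating-function/Pick-function argument rather than a term-by-term induction. You flag the bookkeeping as the main obstacle, which is fair, but the obstacle is real: the induction hypothesis as stated does not close by itself.
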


Notice that, comparing with the potential function \eq{oldpotential}, our new potential function \eq{defpotential} blows up much faster when the eigenvalues of $A$ are closer to the boundaries $\ell$ and $u$.
This allows the problem of finding an ``optimal'' vector much easier than using $\Lambda_{u,\ell,p}$. At the same time, we avoid the problem of taking a small step $\Delta\preceq 1/p\cdot (uI-\ma)$ by taking a ``non-linear" step $\Delta\preceq (uI-\ma)^2$. As there cannot be too many eigenvalues close to the boundaries, this ``non-linear'' step allows us to take a large step except on a few directions.

\subsection{A simple construction based on $\oracle$\label{sec:reduction}}

The second technique we introduce is the reduction from 
a  spectral sparsifier with two-sided constraints
 to the one with one-sided constraints.  Geometrically,
it is equivalent to require the constructed ellipsoid inside another ellipsoid, instead of being sandwiched between two spheres as depicted in \figref{bsspic}.
Ideally, we want to reduce the two-sided problem to the following problem: for a set of \textsf{PSD} matrices $\calM=\{M_i \}_{i=1}^m$ such that $\sum_{i=1}^m{M_i} \preceq I$,  find a sparse representation $\Delta=\sum_{i=1}^m \alpha_i M_i$ with small $\nnz(\alpha)$ such that $\Delta\preceq I$. 
However, in the reduction we use such matrix  $\Delta=\sum_{i=1}^m \alpha_i M_i$ to update $A_j$ and we need the length of $\ellip(\Delta)$ is large on the direction that $\ellip(A_j)$ is small.
 To encode this information, we define the generalised one-sided problem as follows:




\begin{defi}[\textbf{One-sided Oracle}]\label{deforacle}
Let  $\mzero \preceq B \preceq I$, $\cp \succeq \mzero, \cn \succeq \mzero$ be  symmetric matrices,  
$\mathcal{M}=\{M_i\}_{i=1}^m$ be  a set of matrices such that $\sum_{i=1}^{m}\mm_{i} = \mi$. 
We call a randomised algorithm $\oracle\left(\calM, B,\cp, \cn \right)$ a one-sided oracle with speed $S\in (0,1]$ and error $\eps>0$, if $\oracle\left(\calM, B,\cp, \cn \right)$  outputs  a matrix $\Delta=\sum_{i=1}^{m}\alpha_{i}\mm_{i}$ such that 
\begin{enumerate}
\item $\mathrm{nnz}(\alpha)\leq \lambda_{\min} (\mb) \cdot \tr\left(\mb^{-1}\right)$.
\item $\Delta\preceq\mb$ and $\alpha_i \geq 0$ for all $i$.
\item $\E\left[\cs\bullet\Delta\right] \geq S \cdot \lambda_{\min}(\mb) \cdot \tr(\cs) - \eps S \cdot \lambda_{\min}(\mb) \cdot \tr(\ca) $, where $\cs=\cp-\cn$, and $\ca=\cp+\cn$.
\end{enumerate}
\end{defi}
We  show in \secref{impleSDP} the existence of a one-sided oracle with  speed $S=\Omega(1)$ and error $\eps=0$, in which case the oracle only requires $\cs$ as input, instead of  $\cp$ and $\cn$. However, to construct such an oracle efficiently an additional error is introduced, which depends on $\cp+\cn$.


For the main algorithm $\mathtt{Sparsify}\left( \calM,\varepsilon \right)$, we  maintain
the matrix $A_j$ inductively as we 
 discussed at the beginning of \secref{overview}. By employing the $\oplus$ operator, we reduce the problem of constructing $\Delta_j$ with two-sided constraints to the problem of  constructing $\Delta_j\oplus\Delta_j$ with  one-sided constraints. 
We also use $\cs$  to  ensure that the length of $\ellip(\Delta)$ is large on the direction where the length of $\ellip(A_j)$ is small. See Algorithm~\ref{algo1} for formal description.

\begin{algorithm} \begin{algorithmic}[1]

\caption{$\mathtt{Sparsify}\left( \calM,\varepsilon\right)$ }

\State $j=0$, $\ma_0=\mzero$;

\State $\ell_{0}=-\frac{1}{4}$, $u_{0}=\frac{1}{4}$;

\While{$u_j - \ell_j < 1$}

\State $\mb_{j}=(u_{j}\mi-\ma_{j})^{2}\oplus(\ma_{j}-\ell_{j}\mi)^{2}$;

\State $\cp=  (1-2\varepsilon)(\ma_{j}-\ell_{j}\mi)^{-2}\exp\left(\ma_{j}-\ell_{j}\mi\right)^{-1}$; 

\State $\cn=  (1+2\varepsilon)(u_{j}\mi-\ma_{j})^{-2}\exp(u_{j}\mi-\ma_{j})^{-1}$;

\State $\Delta_{j}\oplus\Delta_{j}=\mathtt{Oracle}\left(\{\mm_{i}\oplus\mm_{i}\}_{i=1}^{m},\mb_{j},\frac{1}{2}\left(\cp\oplus\cp\right),\frac{1}{2}\left(\cn\oplus\cn\right)\right)$;

\State $\ensuremath{\ma_{j+1}\leftarrow\ma_{j}+\varepsilon\cdot\Delta_{j}}$;

\State $\delta_{u,j}=\varepsilon\cdot\frac{(1+2\varepsilon)(1+\eps)}{1-4\varepsilon}\cdot S\cdot\lambda_{\min}(\mb_{j})$
and $\delta_{\ell,j}=\varepsilon\cdot \frac{(1-2\varepsilon)(1-\eps)}{1+4\varepsilon}\cdot S\cdot\lambda_{\min}(\mb_{j})$;
\State $u_{j+1}\leftarrow u_{j}+\delta_{u,j}$, $\ell_{j+1}\leftarrow\ell_{j}+\delta_{\ell,j}$;

\State $j\leftarrow j+1.$

\EndWhile

\State \textbf{Return} $\ma_{j}$.
\label{algo1}
\end{algorithmic}
 \end{algorithm}

To analyse Algorithm~\ref{algo1}, we use the fact that the returned $\Delta_j$ satisfies the preconditions of \lemref{potentialchange} and prove in \lemref{potential_decreasing} that
\[
\E\left[\Phi_{u_{j+1},\ell_{j+1}}(\ma_{j+1})\right]\leq\Phi_{u_{j},\ell_{j}}(\ma_{j})
\]
for any iteration $j$.
Hence, with high probability the bounded ratio between $u_{T}$ and $\ell_T$ after the final iteration $T$ implies that the $\ellip(A_T)$ is close to be a sphere. In particular,  for any $\eps< 1/20$, a $(1+O(\eps))$-spectral sparsfier can be constructed by calling  $\oracle$  $O\left(\frac{\log^{2}n}{\varepsilon^{2} \cdot S}\right)$
times, which is  described in the lemma below.

\begin{lem}\label{lem:reductionlem}
Let $0<\varepsilon<1/20$. Suppose we have one-sided oracle $\mathtt{Oracle}$ with speed $S$ and error $\eps$. Then,
with constant probability 
 the algorithm
$\mathtt{Sparsify}\left( \calM,\varepsilon \right)$ outputs a $\left(1+O(\varepsilon)\right)$-spectral sparsifier with  $O\left(\frac{n}{\varepsilon^{2}\cdot S}\right)$ vectors
by calling $\oracle$ $O\left(\frac{\log^{2}n}{\varepsilon^{2} \cdot S}\right)$
times.
\end{lem}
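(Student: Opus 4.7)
My plan is to establish that the potential $\Phi_{u_j,\ell_j}(A_j)$ is non-increasing in expectation under the algorithm, and then to derive from this the iteration count, sparsity, and approximation ratio. The preconditions of \lemref{potentialchange} are met with $\delta=\eps\leq 1/20$ because the oracle guarantees $\Delta_j\oplus\Delta_j\preceq B_j=(u_jI-A_j)^2\oplus(A_j-\ell_jI)^2$, so $\eps\Delta_j\preceq\eps(u_jI-A_j)^2$ and $\eps\Delta_j\preceq\eps(A_j-\ell_jI)^2$. The lemma then yields $\Phi_{u_j,\ell_j}(A_j+\eps\Delta_j)-\Phi_{u_j,\ell_j}(A_j)\leq -\eps\,(C_+-C_-)\bullet\Delta_j$, whose coefficient matrices are precisely the $C_+,C_-$ defined inside the algorithm. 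Applying the oracle's third condition to the input $\frac{1}{2}(C_+\oplus C_+),\frac{1}{2}(C_-\oplus C_-)$ on $\{M_i\oplus M_i\}$ yields $\E[(C_+-C_-)\bullet\Delta_j]\geq S\lambda_{\min}(B_j)\tr(C_+-C_-)-\eps S\lambda_{\min}(B_j)\tr(C_++C_-)$. The barrier updates $u_j\to u_{j+1}$ and $\ell_j\to\ell_{j+1}$ then produce additional changes of approximately $-\delta_{u,j}\tr(C_-)/(1+2\eps)$ and $+\delta_{\ell,j}\tr(C_+)/(1-2\eps)$ (by integrating $\frac{d}{du}\Phi_u$ and $\frac{d}{d\ell}\Phi_\ell$). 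Grouping by $\tr(C_+)$ and $\tr(C_-)$, it suffices to verify $\delta_{\ell,j}\leq(1-\eps)(1-2\eps)\,\eps S\lambda_{\min}(B_j)$ and $\delta_{u,j}\geq(1+\eps)(1+2\eps)\,\eps S\lambda_{\min}(B_j)$, both of which follow directly from the step sizes in the algorithm since $1/(1-4\eps)>1>1/(1+4\eps)$.

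From monotonicity, $\Phi_{u_j,\ell_j}(A_j)\leq\Phi_{u_0,\ell_0}(\mzero)=2ne^{4}=O(n)$ holds in expectation, and throughout the execution with constant probability by Doob's maximal inequality for non-negative supermartingales. Each summand $\exp(1/(u_j-\lambda_i(A_j)))$ being bounded by the whole potential forces $u_j-\lambda_{\max}(A_j)=\Omega(1/\log n)$ and symmetrically $\lambda_{\min}(A_j)-\ell_j=\Omega(1/\log n)$, so $\lambda_{\min}(B_j)=\Omega(1/\log^2 n)$. Combined with $\delta_{u,j},\delta_{\ell,j}=O(\eps S/\log^{2}n)$, this justifies the first-order Taylor approximation used above for the barrier contributions. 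The inequality $e^x\geq x^2/2$ also gives $\tr(B_j^{-1})=O(n)$ throughout.

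A direct expansion shows $\delta_{u,j}-\delta_{\ell,j}=\Theta(\eps^{2}S\lambda_{\min}(B_j))$ since the two differ multiplicatively by a factor of $1+O(\eps)$. Because $u_j-\ell_j$ starts at $1/2$ and must reach $\geq 1$ before termination, $\sum_j\lambda_{\min}(B_j)=\Theta(1/(\eps^{2}S))$; combined with $\lambda_{\min}(B_j)\geq\Omega(1/\log^2 n)$ this yields $T=O(\log^{2}n/(\eps^{2}S))$ oracle calls. Oracle condition (1) gives $\mathrm{nnz}(\alpha_j)\leq\lambda_{\min}(B_j)\tr(B_j^{-1})=O(n\lambda_{\min}(B_j))$ per iteration, summing to $O(n/(\eps^{2}S))$ total vectors. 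Finally, $\delta_{u,j}/\delta_{\ell,j}=1+O(\eps)$ together with $\sum_j(\delta_{u,j}-\delta_{\ell,j})\geq 1/2$ forces both $\sum\delta_{u,j}$ and $\sum\delta_{\ell,j}$ to be $\Omega(1/\eps)$, hence $u_T,\ell_T=\Omega(1/\eps)$; rescaling $A_T$ by $c=2/(u_T+\ell_T)$ yields a matrix with eigenvalues in $[1-(u_T-\ell_T)/(u_T+\ell_T),\,1+(u_T-\ell_T)/(u_T+\ell_T)]=[1-O(\eps),\,1+O(\eps)]$, the desired $(1+O(\eps))$-spectral sparsifier.

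The main obstacle is the detailed bookkeeping in the monotonicity step: one must simultaneously track the $(1\pm 2\eps)$ factors from \lemref{potentialchange}, the $\eps$ error in the oracle's third condition, and the $(1\pm 4\eps)$ factors appearing in $\delta_{u,j}$ and $\delta_{\ell,j}$, and verify that the $\tr(C_+)$ and $\tr(C_-)$ coefficients cancel with the correct signs. A secondary technicality is controlling the second-order Taylor remainder of the barrier contribution, which reduces to comparing $\delta_{u,j}$ with the spectral gap $u_j-\lambda_{\max}(A_j)$: the ratio is $O(\eps S/\log n)$, and is easily absorbed into the slack between $1/(1\pm 4\eps)$ and the thresholds required for monotonicity.
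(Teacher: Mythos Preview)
Your proposal is correct and follows essentially the same route as the paper: establish that $\Phi_{u_j,\ell_j}(A_j)$ is a nonnegative supermartingale (this is the paper's \lemref{potential_decreasing}), deduce $\lambda_{\min}(B_j)=\Omega(1/\log^2 n)$ and $\tr(B_j^{-1})=O(n)$ from a uniform potential bound, and then read off the iteration count, sparsity, and approximation ratio from $\delta_{u,j}-\delta_{\ell,j}=\Theta(\eps^2 S\lambda_{\min}(B_j))$ and $\delta_{u,j}/\delta_{\ell,j}=1+O(\eps)$.

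Two small differences worth noting. First, for the barrier-shift contribution you appeal to a first-order Taylor expansion of $u\mapsto\Phi_u$ and $\ell\mapsto\Phi_\ell$ and argue that the remainder is absorbed by the $1/(1\pm4\eps)$ slack in the step sizes; the paper instead proves a companion lemma (\lemref{potential2}) of the same form as \lemref{potentialchange}, which directly yields the clean factors $(1\pm4\eps)$ when applied at $A_{j+1}$ with $\delta=2\eps$, and then uses the monotonicity in $A$ (since $A_{j+1}\succeq A_j$) to replace $A_{j+1}$ by $A_j$ in those terms. Your remainder bound is essentially equivalent but less explicit; you should also state the monotonicity step, since without it the barrier-shift terms are evaluated at $A_{j+1}$ rather than at $A_j$ where $C_\pm$ live. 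Second, your use of Doob's maximal inequality to control $\max_j\Phi_{u_j,\ell_j}(A_j)=O(n)$ with constant probability is actually a bit cleaner than the paper's treatment, which combines a per-iteration Markov bound (to get $\Phi=n^{O(1)}$ for the iteration count) with a separate expectation argument for the sparsity.
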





\subsection{Solving $\oracle$ via SDP\label{sec:impleSDP}}


Now we  show that the required solution of  $\oracle(\calM, B, \cs)$ indeed exists\footnote{As the goal here is to prove the existence of $\oracle$ with error $\eps=0$, the input here is $\cs$ instead of $C_+$ and $C_-$.}, and can be further solved in nearly-linear time by  a semi-definite program.
We first prove that the required matrix satisfying the conditions of Definition~\ref{deforacle} exists for some absolute constant $S=\Omega(1)$ and $\eps=0$. To make a parallel discussion between Algorithm~\ref{algo1} and the algorithm we will present later,
 we  use $A$ to denote the output of the $\oracle$ instead of $\Delta$. We adopt the ideas between the ellipsoid and two spheres discussed before, but only  consider one sphere for the one-sided case. Hence, we introduce a barrier value $u_j$ for each iteration $j$, where $u_0=0$. We will use the  potential function 
\[
\Psi_{j}=\tr\left(u_jB-A_j \right)^{-1}
\]
in our analysis, where $u_j$  is increased by
$
\delta_j\triangleq (\Psi_j\cdot\lambda_{\min}(B))^{-1}
$
after  iteration $j$. 
Moreover, since we only need to prove the existence of the required matrix $A=\sum_{i=1}^m \alpha_i M_i$, our process proceeds for $T$ iterations, where \emph{only} one vector is chosen  in each iteration.  To find  a desired vector, we perform random sampling, where  each matrix $M_i$ is sampled with probability  $\mathsf{prob}(M_i)$ proportional to $M_i\bullet \cs$, i.e.,
\begin{equation}\label{eq:defdist}
\mathsf{prob}(M_i)\triangleq \frac{\left(M_i\bullet \cs\right)^+}{ \sum_{t=1}^m \left(M_t\bullet \cs\right)^+},
\end{equation}
where $x^+\triangleq \max\{x,0\}$. Notice that, since 
our goal is to construct $A$ such that $\mathbb{E}[\cs\bullet A]$ is lower bounded by some threshold as stated in Definition~\ref{deforacle}, we should not pick any matrix $M_i$ with $M_i\bullet \cs < 0$. This random sampling procedure is described in Algorithm~\ref{existencealgo}, and the properties of the output matrix is summarised in \lemref{one_sided}.

\begin{algorithm} \begin{algorithmic}[1]

\caption{$\mathtt{SolutionExistence}\left(\calM,\mb,\cs\right)$ \label{existencealgo} }

\State$\ma_{0}=\mzero$, $u_{0}=1$ and $T=\left\lfloor \lambda_{\min}(\mb)\cdot\tr(\mb^{-1})\right\rfloor $;


\For{$j = 0,1,\dots,T-1$}

\Repeat

\State Sample  a matrix $M_{t}$ with probability $\mathsf{prob}(M_{t})$;

\State Let $\Delta_{j}=(4\Psi_{j}\cdot \mathsf{prob}(M_{t}))^{-1}\cdot\mm_{t}$;

\Until{ $\Delta_{j}\preceq\frac{1}{2}(u_{j}\mb-\ma_{j})$;}

\State $\ma_{j+1}=\ma_{j}+\Delta_{j}$;

\State  $\delta_j= (\Psi_j\cdot\lambda_{\min}(B))^{-1}$;

\State $u_{j+1}=u_{j}+\delta_{j}$;

\EndFor
\State \textbf{Return} $\frac{1}{u_{T}}\ma_{T}$.

\end{algorithmic}
 \end{algorithm}

\begin{lem} \label{lem:one_sided}
Let $\mathbf{0}\preceq B\preceq I$ and $\cs$ be symmetric matrices, and $\calM=\{M_i\}_{i=1}^m$ be a set of $\textsf{PSD}$ matrices such that 
$\sum_{i=1}^m M_i = I$. Then 
$\mathtt{SolutionExistence}\left(\calM,\mb,\cs \right)$ outputs
a matrix $\ma=\sum_{i=1}^m\alpha_i \mm_i$ such that the following  holds:
\begin{enumerate}
\item $\mathrm{nnz}(\alpha)=\left\lfloor \lambda_{\min}(\mb)\cdot\tr(\mb^{-1})\right\rfloor $.
\item $\ma\preceq\mb$, and $\alpha_i \geq 0$ for all $i$.
\item $\E\left[\cs\bullet\ma\right]\geq\frac{1}{32}\cdot \lambda_{\min}(\mb)\cdot \tr\left( \cs\right)$.
\end{enumerate}
\end{lem}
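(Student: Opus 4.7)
Properties~1 and~2 are the straightforward parts. Property~1 is immediate from the loop structure: exactly $T=\lfloor\lambda_{\min}(B)\cdot\tr(B^{-1})\rfloor$ iterations execute, each appending one term $\alpha_j M_{t_j}$ with $\alpha_j\geq 0$. For property~2, the plan is to maintain the invariant $A_j\preceq u_j B$ by induction on $j$. The base case $A_0=\mathbf{0}\preceq B=u_0 B$ holds. For the inductive step, the acceptance test guarantees $\Delta_j\preceq\tfrac{1}{2}(u_j B-A_j)$, so
\[
A_{j+1}=A_j+\Delta_j\preceq\tfrac{1}{2}(A_j+u_j B)\preceq u_j B\preceq u_{j+1} B.
\]
Dividing by $u_T$ at the end yields $A=A_T/u_T\preceq B$.

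The core of the proof is property~3, which I would split into three ingredients. First, a per-step gain: conditional on $(A_j,u_j)$ and before rejection, a single sample satisfies
\[
\E\!\left[C\bullet\Delta_j\,\big|\,\mathcal{F}_j\right] \;=\; \frac{1}{4\Psi_j}\sum_{t}(M_t\bullet C)^+ \;\geq\; \frac{\tr(C)}{4\Psi_j},
\]
using $\sum_t (M_t\bullet C)^+\geq \sum_t M_t\bullet C=\tr(C)$. Second, a rejection bound: let $X_j=(u_j B-A_j)^{-1/2}\Delta_j(u_j B-A_j)^{-1/2}\succeq 0$. Since $X_j$ is PSD, rejection ($\lambda_{\max}(X_j)>\tfrac{1}{2}$) is implied by $\tr(X_j)>\tfrac{1}{2}$, so using the pointwise estimate $\mathbf{1}_{\text{reject}}\leq 2\tr(X_j)$ together with the computation $\E[\tr(X_j)]\leq\tfrac{1}{4\Psi_j}\tr((u_j B-A_j)^{-1})=\tfrac{1}{4}$, both the rejection probability and $\E[(C\bullet\Delta_j)\mathbf{1}_{\text{reject}}]$ (extended via the pointwise bound $(C\bullet\Delta_j)\mathbf{1}_{\text{reject}}\leq 2(C\bullet\Delta_j)\tr(X_j)$, using $C\bullet\Delta_j\geq 0$ under the sampling distribution) are controlled. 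The upshot is $\E[C\bullet\Delta_j\mid\text{accept}]\geq \tr(C)/(8\Psi_j)$.

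Third, a potential and barrier control step that ties everything together. Writing $M_j=u_j B-A_j$, the expansion
\[
(M_j-\Delta_j+\delta_j B)^{-1}\preceq M_j^{-1}+M_j^{-1}(\Delta_j-\delta_j B)M_j^{-1}+2(M_j^{-1}\Delta_j)^2 M_j^{-1},
\]
valid since $\Delta_j\preceq\tfrac{1}{2}M_j$, is the right quantitative version of ``updating $u$ by $\delta_j$ and $A$ by $\Delta_j$ changes $\Psi$ approximately linearly''. The step size $\delta_j=(\Psi_j\lambda_{\min}(B))^{-1}$ is chosen exactly so that the deterministic decrease from the $-\delta_j B$ term dominates the expected first-order increase from $\Delta_j$ (whose expectation is bounded by $\tfrac{1}{4\Psi_j}\mi$), while the $\tfrac{1}{2}$ cap keeps the second-order term a bounded fraction of the first. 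Taking traces and expectations then yields $\E[\Psi_{j+1}\mid\mathcal{F}_j]=\Oh(\Psi_j)$ uniformly, hence $\E[\Psi_j]=\Oh(\Psi_0)=\Oh(\tr(B^{-1}))$ throughout. Summing step sizes, $u_T-1=\sum_j\delta_j=\Oh(T/(\tr(B^{-1})\lambda_{\min}(B)))=\Oh(1)$; combined with the per-step gain, $\E[C\bullet A_T]=\Omega(T\cdot\tr(C)/\tr(B^{-1}))=\Omega(\lambda_{\min}(B)\tr(C))$, and dividing by $u_T=\Oh(1)$ recovers the constant $1/32$ after bookkeeping. The main obstacle is this potential step: the two opposing first-order terms balance only if the second-order correction is well-controlled, and the $\tfrac{1}{2}$ rejection cap is precisely what makes that correction a bounded fraction of the decrease from $\delta_j B$.
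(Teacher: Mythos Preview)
Your overall architecture is right, but the potential-control step has two real gaps.

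First, the expansion you wrote is false as stated. Set $\Delta_j=0$: your inequality would give $(M_j+\delta_j B)^{-1}\preceq M_j^{-1}-\delta_j M_j^{-1}BM_j^{-1}$, but by operator convexity of $X\mapsto X^{-1}$ the first-order Taylor approximation is a \emph{lower} bound, so the inequality goes the other way. You cannot pack the $-\delta_j B$ update into a linear term and charge all second-order error to $\Delta_j$ alone. The paper avoids this by separating the two updates: first move $u_j\to u_{j+1}$ and use convexity of $u\mapsto\tr((uB-A_j)^{-1})$ to get $\tr(N_j^{-1})\leq\Psi_j-\delta_j\tr(N_j^{-2}B)$ with $N_j=u_{j+1}B-A_j$; then move $A_j\to A_{j+1}$ and use $\Delta_j\preceq\tfrac12 N_j$ to get $\Psi_{j+1}\leq\tr(N_j^{-1})+2\,\Delta_j\bullet N_j^{-2}$. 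Both inequalities now point the right way, and with $\delta_j=(\Psi_j\lambda_{\min}(B))^{-1}$ and the rejection bound $\Pr[\mathcal{E}_j]\geq\tfrac12$ they combine to $\E[\Psi_{j+1}\mid\mathcal{F}_j]\leq\Psi_j$ \emph{with constant exactly $1$}.

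Second, that constant matters: your chain ``$\E[\Psi_{j+1}\mid\mathcal{F}_j]=O(\Psi_j)$ hence $\E[\Psi_j]=O(\Psi_0)$'' fails for any constant $>1$ because $T$ can be $\Theta(n)$. You need the genuine supermartingale property, and then Jensen (via convexity of $x\mapsto(x+\Psi_0)^{-1}$) gives $\E[(\Psi_j+\Psi_0)^{-1}]\geq 1/(2\Psi_0)$. Two smaller points: the per-step gain $C\bullet\Delta_j=\beta/(4\Psi_j)$ with $\beta=\sum_i(M_i\bullet C)^+$ is \emph{deterministic} given $\Psi_j$ (the rescaling by $\mathsf{prob}(M_t)^{-1}$ cancels the randomness), so no rejection correction is needed on the gain side; and your bound $u_T-1=\sum_j\delta_j=O(1)$ requires a \emph{lower} bound on $\Psi_j$, not the upper bound you derived --- the paper handles the coupling between numerator and denominator by writing $C\bullet A=\tfrac{\beta\lambda_{\min}(B)}{4}\cdot\frac{\sum_j\Psi_j^{-1}}{\lambda_{\min}(B)+\sum_j\Psi_j^{-1}}$ and bounding the ratio directly.
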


\lemref{one_sided} shows that the required matrix $A$ defined in Definition~\ref{deforacle} exists, and can be found by  random sampling  described in Algorithm~\ref{existencealgo}. Our key observation is that such matrix $A$ can be constructed  by  a semi-definite program.

\begin{thm}\label{thm:sdp}
Let $\mathbf{0}\preceq B\preceq I$, $\cs$ be symmetric matrices, and $\calM=\{M_i\}_{i=1}^m$ be a set of matrices such that $\sum_{i=1}^m M_i = I$.
Let $S\subseteq [m]$ be a random set of $\left\lfloor \lambda_{\min}(\mb)\cdot\tr(\mb^{-1})\right\rfloor $ coordinates, where every index $i$ is picked with probability $\mathsf{prob}(M_i)$. Let $\ma^{\star}$
be the solution of the following semidefinite program
\begin{equation}\label{eq:SDP}
\max_{\alpha_i\geq0}\cs\bullet\left(\sum_{i\in S}\alpha_{i}\mm_{i}\right)\text{ subject to }\ma=\sum_{i\in S}\alpha_{i}\mm_{i}\preceq\mb
\end{equation}
Then, we have $\E \left[\cs\bullet\ma^{\star}\right] \geq \frac{1}{32}\cdot\lambda_{\min}(\mb)\cdot \tr (\cs)$.
\end{thm}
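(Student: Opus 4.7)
The plan exploits that~\eq{SDP} is a maximization: its optimum $\ma^{\star}$ satisfies $\cs\bullet \ma^{\star}\geq \cs\bullet \tilde{\ma}$ for any random $\tilde{\ma}=\sum_{i\in S}\alpha_{i}\mm_{i}$ with $\alpha_{i}\geq 0$ and $\tilde{\ma}\preceq \mb$. It is therefore enough to exhibit such a $\tilde{\ma}$, jointly distributed with $S$, whose expected value meets the target $\tfrac{1}{32}\lambda_{\min}(\mb)\tr(\cs)$.

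To produce such a $\tilde{\ma}$, I would couple the draw of $S$ with the randomness inside $\mathtt{SolutionExistence}$: draw $i_{1},\ldots,i_{T}$ i.i.d.\ from $\mathsf{prob}$ so that $S=\{i_{1},\ldots,i_{T}\}$ has the hypothesised distribution, and feed these samples in order into a ``skip-on-failure'' variant of Algorithm~\ref{existencealgo}. At iteration $j$, this variant sets $\Delta_{j}=(4\Psi_{j}\mathsf{prob}(\mm_{i_{j+1}}))^{-1}\mm_{i_{j+1}}$, adds it to $\ma_{j}$ if $\Delta_{j}\preceq \tfrac{1}{2}(u_{j}\mb-\ma_{j})$, and otherwise records $\alpha_{i_{j+1}}=0$ and moves on. The resulting $\tilde{\ma}=\sum_{i\in S}\alpha_{i}\mm_{i}$ has non-negative coefficients, satisfies $\tilde{\ma}\preceq \mb$ by the running invariant, and is supported on $S$ by construction, so it is feasible for~\eq{SDP}, which gives $\cs\bullet \ma^{\star}\geq \cs\bullet \tilde{\ma}$ deterministically.

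The final step is to lower-bound $\Ex{\cs\bullet \tilde{\ma}}$ by $\tfrac{1}{32}\lambda_{\min}(\mb)\tr(\cs)$, which I expect to follow by re-using the per-iteration analysis underlying \lemref{one_sided}: the crucial quantity there is $\Ex{\cs\bullet \Delta_{j}}$ computed over a single draw from $\mathsf{prob}$, and this is the same statistic that governs the skipped variant, since skipped iterations add $\mathbf{0}$ to $\ma_{j}$ while still incrementing $u_{j}$ and hence only make subsequent acceptance tests easier to pass. The main obstacle, and the only step requiring genuinely new work, is to audit the proof of \lemref{one_sided} to confirm that it actually bounds the \emph{unconditional} per-iteration expectation rather than the one conditioned on acceptance in the repeat-until loop; once this accounting is verified, summing over the $T$ iterations and combining with the previous paragraph yields $\Ex{\cs\bullet \ma^{\star}}\geq \tfrac{1}{32}\lambda_{\min}(\mb)\tr(\cs)$, as required.
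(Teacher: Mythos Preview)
Your plan is the paper's: construct a feasible point for~\eq{SDP} via (a coupling with) the algorithm behind \lemref{one_sided}, then use that the SDP is a maximisation. The paper's proof is literally two sentences citing \lemref{one_sided} and asserting that ``the probability we used in the statement is the same'' as in $\mathtt{SolutionExistence}$; your skip-on-failure coupling is the honest way to make that assertion precise, and the audit you flag goes through with only routine changes (the supermartingale bound $\E[\Psi_{j+1}\mid\text{history}]\le\Psi_j$ and the acceptance bound $\Pr[\mathcal{E}_j]\ge 1/2$ both carry over to the skip variant, at worst costing an absolute constant in the final bound).
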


Taking the SDP formuation \eq{SDP} and the specific constraints of the $\oracle$'s input into account, the next lemma  shows that the required matrix used in each iteration of $\ensuremath{\mathtt{Sparsify}\left(\calM,\varepsilon\right)}$  can be computed efficiently by solving a semidefinite program.

\begin{lem}
\label{lem:oracle_implementation}The Oracle used in Algorithm $\ensuremath{\mathtt{Sparsify}\left(\calM,\varepsilon\right)}$
can be implemented in \[
\tilde{O}\left((Z+n^{\omega})\cdot\varepsilon^{-O(1)}\right)\quad\text{work}\quad\text{and}\quad\tilde{O}\left(\varepsilon^{-O(1)}\right)\text{ depth}
\]
where $Z=\sum_{i=1}^m\mathrm{nnz}(M_i)$ is the total number of non-zeros in $\mm_{i}$. When the
matrix $\sum_{i=1}^m M_i=\mi$ comes from spectral sparsification of graphs,
each iteration of $\ensuremath{\mathtt{Sparsify}\left(\calM,\varepsilon\right)}$
can be implemented in 
\[
\tilde{O}\left(m\varepsilon^{-O(1)}\right)\text{work}\quad\text{and}\quad\tilde{O}\left(\varepsilon^{-O(1)}\right)\text{ depth}.
\]
Furthermore, the speed of this one-sided oracle is $\Omega(1)$ and the error of this one-sided oracle is $\eps$. \end{lem}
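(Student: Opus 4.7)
}

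The plan is to instantiate the Oracle via the sampling-plus-SDP scheme licensed by \thmref{sdp}: first approximately sample $S\subseteq[m]$ of size $\lfloor \lambda_{\min}(B)\cdot \tr(B^{-1})\rfloor$ from a distribution close to $\mathsf{prob}(M_i)\propto (M_i\bullet C)^+$, then solve the restricted SDP \eqref{eq:SDP} on the coordinates in $S$. The two nontrivial components are (i) computing the sampling probabilities $M_i\bullet C$ when $C_\pm$ are given implicitly as $(A-\ell I)^{-2}\exp((A-\ell I)^{-1})$ and $(uI-A)^{-2}\exp((uI-A)^{-1})$, and (ii) solving the $|S|$-variable SDP in the claimed work. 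The error $\eps$ in the Oracle will arise only from the approximate evaluation of $M_i\bullet C_+$ and $M_i\bullet C_-$, which is why the Oracle's specification requires the separate matrices $C_+,C_-$ rather than only $C_+-C_-$: the additive error cleanly splits between the positive and negative parts and shows up as the $\eps\cdot\tr(C_{|\cdot|})$ term in Definition~\ref{deforacle}.

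For (i), I would first approximate $\exp$ on the relevant spectral interval by a polynomial of degree $d=\tilde O(1)$ (a truncated Taylor series suffices since the arguments lie in $[0,O(1)]$ whenever $u_j-\ell_j\le 1$ and the invariant \eqref{eq:invariant} holds). This reduces each $C_\pm$ to a polynomial in $(A-\ell I)^{-1}$ or $(uI-A)^{-1}$ of small degree, applied to $(A-\ell I)^{-2}$ or $(uI-A)^{-2}$. Next I apply a Johnson--Lindenstrauss sketch: pick a $k\times n$ Gaussian matrix $Q$ with $k=\tilde O(\eps^{-2})$, and write $M_i\bullet C_\pm = \tr(C_\pm^{1/2} M_i C_\pm^{1/2}) \approx \|Q C_\pm^{1/2} \sqrt{M_i}\|_F^2$, so that it suffices to maintain the $k$ sketch rows $Q\cdot C_\pm^{1/2}$ and then, for each $i$, evaluate a small number of matrix-vector products with $\sqrt{M_i}$. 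Each row of $QC_\pm^{1/2}$ is produced by a constant number of applications of the polynomial in $(A-\ell I)^{-1}$ (or $(uI-A)^{-1}$) to a single vector. In the graph case each such application is one Laplacian-type solve (the matrices $A-\ell I$ and $uI-A$ remain SDD throughout the invariant \eqref{eq:invariant}) and costs $\tilde O(m\,\eps^{-O(1)})$ via the nearly-linear SDD solver; in the general matrix setting, a single $\tilde O(n^\omega)$ factorisation lets every subsequent solve run in $\tilde O(n^2)$. Accumulating over all $i$ gives $\tilde O(Z)$ additional work for the $\sqrt{M_i}$ products since each $M_i$ is touched $\tilde O(1)$ times, summing to $\tilde O((Z+n^\omega)\eps^{-O(1)})$ in the general case and $\tilde O(m\,\eps^{-O(1)})$ in the graph case, all in $\tilde O(\eps^{-O(1)})$ depth. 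Standard Chernoff bounds over the Gaussian sketch show that the estimates satisfy $|\widetilde{M_i\bullet C_\pm}-M_i\bullet C_\pm|\le \eps\,(M_i\bullet C_+ + M_i\bullet C_-)/|S|$ with high probability, which is the perturbation needed to turn the exact bound of \thmref{sdp} into the $\eps$-error bound of Definition~\ref{deforacle}.

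For (ii), the restricted SDP \eqref{eq:SDP} has $|S|=\lambda_{\min}(B)\tr(B^{-1})$ nonnegative variables and one PSD constraint $\sum_{i\in S}\alpha_i M_i\preceq B$ of dimension at most $2n$. I would maximise $C\bullet\sum_{i\in S}\alpha_i M_i$ under this matrix constraint by matrix multiplicative weights / packing-SDP solvers: each iteration requires computing $\exp(Y)$ for an $n\times n$ matrix $Y$ (handled by the same polynomial+sketch machinery above) and a linear scan over the $|S|$ coordinates. Because the analysis in \thmref{sdp} only needs the optimal value of \eqref{eq:SDP} to be attained up to a constant factor, $\tilde O(\eps^{-O(1)})$ outer iterations of the packing-SDP solver suffice, and each iteration runs in $\tilde O(Z+n^\omega)$ work and $\tilde O(1)$ depth using the building blocks already developed. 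The feasibility post-processing --- scaling $\alpha$ to satisfy $\sum \alpha_i M_i\preceq B$ exactly --- costs an additional constant factor in the objective which can be absorbed into $S$.

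The hard part is the joint error budgeting. Naive perturbation of the sampling probabilities would rescale $S=\Omega(1)$ by a factor that grows with $\tr(C_{|\cdot|})/|\tr(C)|$, defeating the oracle specification. What I expect to work is the observation that \thmref{sdp} only uses the sampling distribution to pick the support $S$ and then takes the unconstrained maximum over $\alpha_i$ restricted to $S$; thus replacing $\mathsf{prob}$ by an $\eps$-approximate distribution with \emph{additive} error proportional to $(M_i\bullet C_+ + M_i\bullet C_-)$ decreases the expected value of the SDP by at most $\eps\cdot\lambda_{\min}(B)\cdot \tr(C_+ + C_-)=\eps\cdot\lambda_{\min}(B)\cdot \tr(C_{|\cdot|})$, which is precisely the slack granted by Definition~\ref{deforacle}. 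Verifying this carefully, together with checking that $\lambda_{\min}(B)\tr(B^{-1})$ is always small enough to keep the SDP and the sampling tractable under the invariant \eqref{eq:invariant} throughout $\ensuremath{\mathtt{Sparsify}\left(\calM,\varepsilon\right)}$, is where the bulk of the technical work lies; the $\Omega(1)$ speed then follows directly from \thmref{sdp}.
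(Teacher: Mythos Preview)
Your proposal follows the same architecture as the paper: sample a support $S$ according to (approximate) $\mathsf{prob}$, solve the resulting packing SDP \eqref{eq:SDP} with a nearly-linear solver (the paper invokes \thmref{solve_SDP}), and implement both the sampling weights and the SDP's matrix-exponential oracle via Johnson--Lindenstrauss sketching together with a polynomial approximation of the relevant matrix functions. So the high-level plan matches.

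The place where your argument is looser is exactly the error budgeting you flag as unfinished, and the paper resolves it differently from what you propose. The paper does \emph{not} analyse how perturbing the sampling distribution moves the SDP optimum; instead it loads all approximation error onto the SDP objective coefficients $c_i=(\cp-\cn)\bullet(\mm_i\oplus\mm_i)$, which are computed with additive error $\delta\cdot(\cp+\cn)\bullet(\mm_i\oplus\mm_i)$, and then uses two structural facts you do not invoke. First, feasibility of the SDP output, $\sum_i\beta_i(\mm_i\oplus\mm_i)\preceq B\preceq I\oplus I$, collapses the accumulated additive error to $O(\delta)\cdot\tr(\ca)$ (and likewise bounds $\mathsf{OPT}\le(\cp+\cn)\bullet B$). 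Second, the invariant $\lambda_{\min}(B)=\Omega(\log^{-2}n)$ from \eqref{eq:lblambdab} rewrites this as $O(\delta\log^2 n)\cdot\lambda_{\min}(B)\cdot\tr(\ca)$, so one must take $\delta=\eps/\log^2 n$ (not $\delta=\eps$) to get oracle error $\eps$. Your claimed bound --- that an $\eps$-perturbed sampling distribution costs only $\eps\cdot\lambda_{\min}(B)\cdot\tr(\ca)$ --- would skip this $\log^2 n$ blow-up, and I do not see how to justify it without routing through feasibility the way the paper does.

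Two smaller corrections. The eigenvalues of $(uI-A)^{-1}$ and $(A-\ell I)^{-1}$ are not in $[0,O(1)]$; they can be as large as $\Theta(\log n)$ (this is again \eqref{eq:lblambdab}), so the polynomial degree is $\tilde O(1)$ only \emph{after} invoking that bound. The paper handles this by Taylor-expanding $x^{-2}\exp(x^{-1})$ around $x=1$ to degree $\tilde O(1/g^2)$ (\lemref{apr_exp}), where $g$ is the spectral gap to the barriers. And the ``$/|S|$'' factor in your JL error bound is not something the sketch gives you: JL yields $(1\pm\delta)$ multiplicative error on each $\mm_i\bullet C_\pm$ separately, hence additive error $\delta\cdot(\mm_i\bullet\ca)$ on $c_i$, with no free averaging over $|S|$.
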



Combining \lemref{reductionlem} and \lemref{oracle_implementation} gives us the proof of the main result.

\begin{proof}[Proof of \thmref{maingraph} and \thmref{mainmatrix}]

\lemref{oracle_implementation} shows that we can construct $\oracle$ with $\Omega(1)$ speed and $\eps$ error that runs in 
\[
\tilde{O}\left((Z+n^{\omega})\cdot\varepsilon^{-O(1)}\right)\quad\text{work}\quad\text{and}\quad\tilde{O}\left(\varepsilon^{-O(1)}\right)\text{ depth}
\]
for the matrix setting and 
\[
\tilde{O}\left(m\varepsilon^{-O(1)}\right)\text{work}\quad\text{and}\quad\tilde{O}\left(\varepsilon^{-O(1)}\right)\text{ depth}.
\]
for the graph setting. 
Combining this with   \lemref{reductionlem}, which states that  it suffices to call  $\oracle$ $\tilde{O}\left(1/\eps^2\right)$ times,
the main statements hold.
\end{proof}

\subsection{Further discussion}  

Before presenting a more detailed analysis of our algorithm,  we   compare our new approach with the previous ones for constructing a linear-sized spectral sparsifier, and see how we 
address the bottlenecks faced in previous constructions. Notice that all 
 previous algorithms require super poly-logarithmic number of  iterations, and super linear-time for each iteration. For instance, our previous algorithm~\cite{LS15} for constructing a sparsifier with $O(pn)$ edges requires $\Omega(n^{1/p})$ iterations and  $\Omega(n^{1+1/p})$ time per iteration for the following reasons:
\begin{itemize}
\item $\Omega\left(n^{1+1/p}\right)$ time is needed per iteration: Each iteration takes $n/g^{\Omega(1)}$ time to pick the vector(s) when $(\ell + g) I \preceq A \preceq (u-g) I$. To avoid eigenvalues of $A$ getting too close to the boundary  $u$ or $\ell$, i.e., $g$ being too small, we choose the potential function whose value dramatically increases when the eigenvalues of $A$ get close $u$ or $\ell$. As the cost, we need to scale down the added vectors by a $n^{1/p}$ factor. 
 \item $\Omega\left(n^{1/p}\right)$ iterations are needed:  By random sampling, we choose $O\left(n^{1-1/p}\right)$ vectors each iteration and use the matrix Chernoff bound to show that the ``quality'' of added  $O\left(n^{1-1/p}\right)$ vectors is just $p=\Theta(1)$ times worse than adding a single vector. Hence, this requires $\Omega\left(n^{1/p}\right)$ iterations.
\end{itemize}
In contrast, our new approach breaks these two barriers through the following way:
\begin{itemize}
\item A ``non-linear'' step: Instead of rescaling down the vectors we add uniformly, we pick much fewer vectors on the direction that blows up, i.e., we impose the condition $\Delta \preceq (uI-A)^2$ instead of $\Delta \preceq 1/p\cdot (uI-A)$. This allows us to use the new potential function \eq{defpotential} with form $\exp\left({x^{-1}}\right)$ to control the eigenvalues in a more aggressive way.
\item SDP filtering:  By matrix Chernoof bound, we know that the probability that we sample a few ``bad'' vectors is small. Informally,
 we apply semi-definite programming to filter out those bad vectors, and this allows us to add $\Omega\left(n/\log^{O(1)}(n)\right)$ vectors in each iteration.
\end{itemize}

\section{Detailed analysis \label{sec:details}}

In this section we give  detailed analysis for the statements presented in \secref{overview}.

\subsection{Analysis of the potential function}

Now we analyse the properties of the potential function \eq{defpotential}, and prove \lemref{potentialchange}.
 The following two facts from matrix analysis will be used in our analysis.

\begin{lem}[Woodbury Matrix Identity] \label{woodburg}
Let $A\in\mathbb{R}^{n\times n}$, $U\in\mathbb{R}^{n\times k}$, $C\in\mathbb{R}^{k\times k}$ and
$V\in\mathbb{R}^{k\times n}$ be matrices. Suppose that $A$, $C$ and $C^{-1} + VA^{-1}U$ are invertible, it holds that
\[
(A+UCV)^{-1}= A^{-1} - A^{-1}U\left( C^{-1} + VA^{-1}U \right)^{-1} VA^{-1}.
\]
\end{lem}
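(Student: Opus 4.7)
The plan is to prove the identity by direct algebraic verification, exploiting the fact that for square matrices of the same dimension it suffices to show that the proposed inverse is a one-sided inverse. Denote the right-hand side by
\[
M \triangleq A^{-1} - A^{-1}U\bigl(C^{-1}+VA^{-1}U\bigr)^{-1}VA^{-1},
\]
which is well defined by the stated invertibility hypotheses on $A$, $C$, and $C^{-1}+VA^{-1}U$. I would then show that $(A+UCV)M = I$, from which the claim follows since both factors are $n\times n$.

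The key computation is to expand
\[
(A+UCV)M = (A+UCV)A^{-1} - (A+UCV)A^{-1}U\bigl(C^{-1}+VA^{-1}U\bigr)^{-1}VA^{-1}.
\]
The first term simplifies to $I + UCVA^{-1}$. For the second term, I would factor $U$ out on the left to write $(A+UCV)A^{-1}U = U + UCVA^{-1}U = U(I + CVA^{-1}U)$, and then use the crucial algebraic identity $I + CVA^{-1}U = C\bigl(C^{-1}+VA^{-1}U\bigr)$. After this substitution the factor $(C^{-1}+VA^{-1}U)$ cancels against its inverse, and the subtracted term collapses exactly to $UCVA^{-1}$, which annihilates the corresponding summand. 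What remains is $I$, as desired.

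There is essentially no obstacle; the proof is a one-page mechanical manipulation, and the hypotheses are used precisely as stated. The only point that requires care is dimension bookkeeping: the middle factor $(C^{-1}+VA^{-1}U)^{-1}$ is $k\times k$ while the flanking matrices act on $\mathbb{R}^n$, so one must be careful not to inadvertently commute factors across this block. An alternative route, which I would mention but not carry out, is to compute the inverse of the $(n+k)\times(n+k)$ block matrix $\bigl(\begin{smallmatrix}A & -U\\ V & C^{-1}\end{smallmatrix}\bigr)$ in two ways via Schur complements, once eliminating the $(1,1)$ block and once the $(2,2)$ block; equating the resulting expressions for the top-left block of the inverse reproduces the identity. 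I would prefer the direct verification, which is shorter and self-contained.
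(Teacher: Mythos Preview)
Your proof is correct; the direct verification that $(A+UCV)M = I$ is the standard textbook argument, and the alternative via Schur complements of the block matrix is also well known. Note, however, that the paper does not actually prove this lemma: it is stated without proof as one of ``two facts from matrix analysis,'' alongside the Golden--Thompson inequality, so there is no approach in the paper to compare against.
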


\begin{lem}[Golden-Thompson Inequality]\label{lem:Golden}
It holds for any symmetric matrices $A$ and $B$ that
\[
\tr\left(\mathrm{e}^{A+B}\right) \leq\tr\left(\mathrm{e}^{A}\cdot \mathrm{e}^B \right).
\]
\end{lem}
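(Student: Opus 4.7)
The plan is to establish the Golden-Thompson inequality via the classical route: the Lie-Trotter product formula combined with a Lieb-Thirring style trace inequality. First, I would invoke the Lie-Trotter formula, which asserts that for any square matrices $A$ and $B$,
\[
\mathrm{e}^{A+B} \;=\; \lim_{k\to\infty} \bigl(\mathrm{e}^{A/k}\, \mathrm{e}^{B/k}\bigr)^{k}.
\]
This is standard and follows by expanding each factor as $I + X/k + O(1/k^{2})$ and bounding the accumulated commutator error in operator norm.

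Next, I would establish the trace inequality $\tr\bigl((PQ)^{k}\bigr) \leq \tr(P^{k}Q^{k})$ for positive semidefinite $P, Q$ and any positive integer $k$. The base case $k=2$ comes from cyclicity of the trace combined with Cauchy-Schwarz in the Hilbert-Schmidt inner product:
\[
\tr\bigl((PQ)^{2}\bigr) \;=\; \tr(PQPQ) \;=\; \langle QP,\,PQ\rangle_{F} \;\leq\; \|QP\|_{F}\,\|PQ\|_{F} \;=\; \|PQ\|_{F}^{2} \;=\; \tr(P^{2}Q^{2}),
\]
where I use $\|QP\|_{F} = \|(QP)^{\rot}\|_{F} = \|PQ\|_{F}$ by symmetry of $P$ and $Q$. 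For general exponent, one iterates the squaring bound along powers of $2$ and then passes to arbitrary $k$ by approximation, or invokes the full Lieb-Thirring inequality, which is cleanest to prove via log-majorization of the singular values of $PQ$ against the eigenvalues of $P^{1/2} Q P^{1/2}$.

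Setting $P = \mathrm{e}^{A/k}$ and $Q = \mathrm{e}^{B/k}$, both positive definite since $A$ and $B$ are symmetric, and applying the inequality with exponent $k$ yields
\[
\tr\bigl((\mathrm{e}^{A/k}\,\mathrm{e}^{B/k})^{k}\bigr) \;\leq\; \tr(\mathrm{e}^{A}\,\mathrm{e}^{B})
\]
uniformly in $k$. Sending $k \to \infty$, the left-hand side converges to $\tr(\mathrm{e}^{A+B})$ by Lie-Trotter together with continuity of the trace on bounded sets, yielding the claim. The main obstacle in this plan is the Lieb-Thirring inequality beyond the $k=2$ case, whose self-contained proof requires some care; since it is a classical result, in the present context it can simply be cited.
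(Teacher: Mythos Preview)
The paper does not prove this lemma at all; it merely states the Golden--Thompson inequality as a classical fact from matrix analysis and then invokes it in the proof of Lemma~\ref{lem:potentialchange}. So there is no argument in the paper to compare your proposal against.

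Your route via Lie--Trotter and the trace inequality $\tr\bigl((PQ)^{k}\bigr)\le\tr(P^{k}Q^{k})$ for positive semidefinite $P,Q$ is the standard textbook proof and is correct. One small remark: you only need the inequality along $k=2^{n}$, since the Lie--Trotter limit can be taken along that subsequence, and for powers of two the inequality follows cleanly by iterating the step $\tr\bigl((PQ)^{2m}\bigr)\le\tr\bigl((P^{2}Q^{2})^{m}\bigr)$ (which in turn comes from $\tr(W^{2m})\le\tr\bigl((W^{\rot}W)^{m}\bigr)$ with $W=PQ$ and cyclicity). This avoids appealing to the full Araki--Lieb--Thirring machinery or to log-majorization, and keeps the argument entirely self-contained. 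Your $k=2$ computation via the Frobenius inner product is fine but is not actually the inductive engine; the halving step above is.
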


\begin{proof}[Proof of \lemref{potentialchange}]
We analyse the change of $\Phi_{u}(\cdot)$ and $\Phi_{\ell}(\cdot)$ individually.  First of all, notice that
\[
(uI-A-\Delta)^{-1}=(uI-A)^{-1/2}\left( I-(uI-A)^{-1/2} \Delta(uI-A)^{-1/2} \right)^{-1}(uI-A)^{-1/2}.
\]
We define $\Pi=(uI-A)^{-1/2}\Delta(uI-A)^{-1/2}$. Since $0\preceq\Delta\preceq \delta(uI-A)^2$ and $u-\ell \leq 1$, it holds that
\[
\Pi \preceq \delta (uI-A)\preceq \delta (uI-\ell I) \preceq \delta I,
\]
and therefore
\[
(I-\Pi)^{-1} \preceq I+ \frac{1}{1-\delta}\cdot\Pi.
\]
Hence, it holds that
\begin{align*}
(uI-A-\Delta)^{-1} &  \preceq (uI-A)^{-1/2}\left( I+ \frac{1}{1-\delta}\cdot\Pi \right)(uI-A)^{-1/2} \\
& = (uI-A)^{-1} + \frac{1}{1-\delta} \cdot (uI-A)^{-1}\Delta(uI-A)^{-1}.
\end{align*}
By the fact that $\tr\exp$ is monotone and the Golden-Thompson Inequality~(\lemref{Golden}), we have that
\begin{align*}
\Phi_u(A+\Delta) &= \tr\exp\left( (uI-A-\Delta)^{-1} \right) \\
& \leq \tr\exp\left( (uI-A)^{-1} + \frac{1}{1-\delta} \cdot (uI-A)^{-1}\Delta(uI-A)^{-1} \right) \\
& \leq \tr\left(\exp(uI- A)^{-1}\exp\left(\frac{1}{1-\delta}\cdot (uI- A)^{-1}\Delta(uI-A)^{-1}\right)\right).
\end{align*}
Since $0 \preceq \Delta\preceq \delta (uI-A)^{2}$ and $\delta \leq 1/10$ by assumption, we have that
$
(uI-A)^{-1}\Delta(uI-A)^{-1} \preceq \delta I$,
and
\[
\exp\left(\frac{1}{1-\delta}\cdot (uI-A)^{-1}\Delta(uI-A)^{-1}\right)\preceq I+(1+2\delta)\cdot (uI-A)^{-1}\Delta(uI-A)^{-1}.
\]
Hence, it holds that
\begin{align*}
\Phi_{u}(\ma+\Delta) & \leq  \tr\left(\ce^{(uI- A)^{-1}}\cdot \left( I+(1+2\delta)(uI-A)^{-1}\Delta(uI-A)^{-1}\right) \right)\\
 & =  \Phi_{u}(\ma)+(1+2\delta)\cdot\tr(\mathrm{e}^{(uI-A )^{-1}}(uI-A)^{-2}\Delta).
\end{align*}

By the same analysis, we have that 
\begin{align*}
\Phi_{\ell}(\ma+\Delta) & \leq  \tr\left(\ce^{(A-\ell I)^{-1}}\cdot \left( I-(1-2\delta)(A- \ell I)^{-1}\Delta(A-\ell I)^{-1}\right) \right)\\
 & =  \Phi_{\ell}(\ma)-(1-2\delta)\cdot\tr(\mathrm{e}^{(A-\ell I )^{-1}}(A- \ell I)^{-2}\Delta).
\end{align*}
Combining the analysis on $\Phi_u(A+\Delta)$ and $\Phi_{\ell}(A+\Delta)$ finishes the proof. 
\end{proof}

\begin{lem}\label{lem:potential2}
Let $A$ be a symmetric matrix. Let $u,\ell$ be the barrier values such that $u-\ell\leq1$ and $\ell\mi\prec\ma\prec u\mi$. Assume
that $0\leq \delta_u\leq \delta\cdot \lambda_{\min}(uI-A)^2$ and $0\leq \delta_{\ell}\leq\delta\cdot \lambda_{\min}(A-\ell I)^2$ for $\delta\leq 1/10$. Then, it holds that 
\begin{align*}
\Phi_{u+\delta_u,\ell+\delta_{\ell}}(\ma)  \leq  & \Phi_{u,\ell}(\ma)  - (1-2\delta) \delta_{u}\cdot \tr\left(\ce^{(u\mi-\ma)^{-1}}(u\mi-\ma)^{-2}\right)\\
 &  \qquad +(1+2\delta)\delta_{\ell}\cdot \tr\left(\ce^{(\ma-\ell\mi)^{-1}}(\ma-\ell\mi)^{-2}\right).
\end{align*}
\end{lem}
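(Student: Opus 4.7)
The plan is to split $\Phi_{u+\delta_u,\ell+\delta_{\ell}}(A) = \Phi_{u+\delta_u}(A) + \Phi_{\ell+\delta_{\ell}}(A)$ and control each summand separately, mirroring the proof of \lemref{potentialchange} but with the change in barrier values now playing the role of the matrix perturbation $\Delta$. Concretely, writing $X \triangleq uI - A$ and $Y \triangleq A - \ell I$, the matrices $(X+\delta_u I)^{-1}$ and $(Y-\delta_{\ell}I)^{-1}$ are exactly what appears inside the traces defining $\Phi_{u+\delta_u}(A)$ and $\Phi_{\ell+\delta_{\ell}}(A)$, and the same first-order-plus-error-control mechanism as in \lemref{potentialchange} applies.

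For the upper-barrier term I will factor $(X+\delta_u I)^{-1} = X^{-1/2}(I+W)^{-1}X^{-1/2}$ with $W = \delta_u X^{-1}$. The hypothesis $\delta_u \leq \delta\,\lambda_{\min}(X)^2$ together with $\lambda_{\min}(X) \leq u-\ell \leq 1$ yields $W \preceq \delta I$, and the Neumann identity $(I+W)^{-1} = I - W + W^2(I+W)^{-1}$ combined with $W^2(I+W)^{-1} \preceq W^2 \preceq \delta W$ gives the matrix upper bound
\[
(X+\delta_u I)^{-1} \preceq X^{-1} - (1-\delta)\delta_u X^{-2}.
\]
Since $X^{-1}$ and $X^{-2}$ commute, no Golden--Thompson is needed; using monotonicity of $\tr\exp$ under the Loewner order, $\Phi_{u+\delta_u}(A) \leq \tr\bigl(\exp(X^{-1})\exp(-(1-\delta)\delta_u X^{-2})\bigr)$. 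The scalar inequality $e^{-(1-\delta)z} \leq 1 - (1-2\delta) z$ for $0 \leq z \leq \delta \leq 1/10$ (verified at the endpoints and via the sign of the derivative) lifts in the eigenbasis of $X$ to $\exp(-(1-\delta)\delta_u X^{-2}) \preceq I - (1-2\delta)\delta_u X^{-2}$. Taking the trace against the PSD matrix $\exp(X^{-1})$ then delivers
\[
\Phi_{u+\delta_u}(A) \leq \Phi_u(A) - (1-2\delta)\delta_u \tr\bigl(e^{(uI-A)^{-1}}(uI-A)^{-2}\bigr).
\]

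The lower-barrier term is handled in exactly the symmetric fashion: with $W' = \delta_{\ell} Y^{-1} \preceq \delta I$, the Neumann expansion $(I-W')^{-1} = I + W' + (W')^2(I-W')^{-1}$ together with $(W')^2 \preceq \delta W'$ gives $(Y-\delta_{\ell} I)^{-1} \preceq Y^{-1} + \delta_{\ell} Y^{-2}/(1-\delta)$. The exponentials again commute, and the scalar inequality $e^{z/(1-\delta)} \leq 1 + (1+2\delta) z$ for $0 \leq z \leq \delta \leq 1/10$, which is the exact estimate already invoked in the proof of \lemref{potentialchange}, produces
\[
\Phi_{\ell+\delta_{\ell}}(A) \leq \Phi_{\ell}(A) + (1+2\delta)\delta_{\ell}\tr\bigl(e^{(A-\ell I)^{-1}}(A-\ell I)^{-2}\bigr).
\]
Summing the two displayed bounds gives the lemma.

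The only calculational subtlety is the verification of the two scalar inequalities with the concrete coefficients $(1-2\delta)$ and $(1+2\delta)$, but these are elementary one-variable checks of the same flavour as the corresponding steps in \lemref{potentialchange}, and the slack at $z=\delta$ is strictly positive for $\delta \leq 1/10$; so no genuine obstacle appears and the proof is essentially a parallel rerun of the earlier argument with the roles of ``perturbing $A$'' and ``perturbing the barriers'' swapped.
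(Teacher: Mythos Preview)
Your proof is correct and follows essentially the same approach as the paper, which simply notes that $\delta_u I \preceq \delta(uI-A)^2$ and $\delta_\ell I \preceq \delta(A-\ell I)^2$ and then says ``the statement follows by a similar analysis for proving \lemref{potentialchange}.'' You have carried out that similar analysis in full, with the minor (and valid) observation that Golden--Thompson is unnecessary here because the perturbations $\delta_u I$, $\delta_\ell I$ commute with $uI-A$ and $A-\ell I$.
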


\begin{proof}
Since $0\leq \delta_u\leq \delta\cdot \lambda_{\min}(uI-A)^2$ and $0\leq \delta_{\ell}\leq\delta\cdot \lambda_{\min}(A-\ell I)^2$,
 we have that $\delta_u\cdot I\preceq\delta\cdot (uI-A)^2$ and
$\delta_{\ell}\cdot I\preceq\delta\cdot (A-\ell I)^2$.
The statement follows by a similar analysis for proving \lemref{potentialchange}.
\end{proof}

\subsection{Analysis of the reduction}

Now we present the detailed analysis for the reduction from a spectral sparsifier to a one-sided oracle. We first analyse Algorithm~\ref{algo1}, and prove that in expectation the value of the potential function is not increasing. Based on this fact, we will give a proof of \lemref{reductionlem}, which shows that a $(1+O(\varepsilon))$-spectral sparsifier can be constructed by calling $\oracle$ $O\left(\frac{\log^{2}n}{\varepsilon^{2} \cdot S}\right)$
times.

\begin{lem}
\label{lem:potential_decreasing} Let $A_j$ and $A_{j+1}$ be the matrices constructed by Algorithm~\ref{algo1} in iteration $j$ and $j+1$, and 
assume that $0 \leq \varepsilon\leq 1/20$. Then, it holds that
\[
\E\left[\Phi_{u_{j+1},\ell_{j+1}}(\ma_{j+1})\right]\leq\Phi_{u_{j},\ell_{j}}(\ma_{j}).
\]
\end{lem}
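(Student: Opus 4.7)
The plan is to split the net change of the potential into two pieces via the intermediate state $(A_j, u_{j+1}, \ell_{j+1})$:
\[
\Phi_{u_{j+1},\ell_{j+1}}(A_{j+1}) - \Phi_{u_j,\ell_j}(A_j) = \bigl[\Phi_{u_{j+1},\ell_{j+1}}(A_j) - \Phi_{u_j,\ell_j}(A_j)\bigr] + \bigl[\Phi_{u_{j+1},\ell_{j+1}}(A_{j+1}) - \Phi_{u_{j+1},\ell_{j+1}}(A_j)\bigr].
\]
For the first bracket (raising the barriers at fixed matrix) I will apply \lemref{potential2} with $A=A_j$; for the second (adding the matrix at raised barriers) I will apply \lemref{potentialchange} with $A=A_j$, barriers $(u_{j+1},\ell_{j+1})$, and perturbation $\varepsilon\Delta_j$. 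The preconditions hold with $\delta=2\varepsilon\leq 1/10$: the oracle guarantee $\Delta_j\oplus\Delta_j\preceq B_j$ gives $\Delta_j\preceq(u_jI-A_j)^2\preceq(u_{j+1}I-A_j)^2$ and similarly on the lower side, so $\varepsilon\Delta_j\preceq\varepsilon(u_{j+1}I-A_j)^2$; while $\delta_{u,j},\delta_{\ell,j}\leq 2\varepsilon\lambda_{\min}(B_j)\leq 2\varepsilon\min\{\lambda_{\min}(u_jI-A_j)^2,\lambda_{\min}(A_j-\ell_jI)^2\}$ follows from $S\leq 1$ and $(1+2\varepsilon)(1+\varepsilon)/(1-4\varepsilon)\leq 2$ for $\varepsilon\leq 1/20$.

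The two bounds land in slightly mismatched trace forms: \lemref{potentialchange} applied with the raised barriers produces $\ce^{(u_{j+1}I-A_j)^{-1}}(u_{j+1}I-A_j)^{-2}$ and $\ce^{(A_j-\ell_{j+1}I)^{-1}}(A_j-\ell_{j+1}I)^{-2}$, rather than $C_-^{0}:=\ce^{(u_jI-A_j)^{-1}}(u_jI-A_j)^{-2}$ and $C_+^{0}:=\ce^{(A_j-\ell_jI)^{-1}}(A_j-\ell_jI)^{-2}$ (which are related to the oracle's inputs by $C_\pm=(1\mp 2\varepsilon)C_\pm^{0}$). To reconcile I will use that $g(x):=\ce^{1/x}/x^{2}$ is decreasing on $(0,\infty)$, and that in the eigenbasis of $A_j$ we have the eigenvalue-wise comparisons $\lambda_i(u_{j+1}I-A_j)\geq\lambda_i(u_jI-A_j)$ and $\lambda_i(A_j-\ell_{j+1}I)\leq\lambda_i(A_j-\ell_jI)$, yielding the PSD inequalities $\ce^{(u_{j+1}I-A_j)^{-1}}(u_{j+1}I-A_j)^{-2}\preceq C_-^{0}$ and $\ce^{(A_j-\ell_{j+1}I)^{-1}}(A_j-\ell_{j+1}I)^{-2}\succeq C_+^{0}$. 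Pairing these with $\Delta_j\succeq 0$ replaces the mismatched matrices by $C_\pm^{0}$ against $\Delta_j$, and summing the two brackets produces
\[
\Phi_{u_{j+1},\ell_{j+1}}(A_{j+1}) - \Phi_{u_j,\ell_j}(A_j) \leq -\varepsilon\,(C_+-C_-)\bullet\Delta_j - (1-4\varepsilon)\delta_{u,j}\tr(C_-^{0}) + (1+4\varepsilon)\delta_{\ell,j}\tr(C_+^{0}).
\]

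Taking expectation, I will invoke the one-sided oracle guarantee on $\{M_i\oplus M_i\}$ with barrier $B_j$ and inputs $\tfrac12(C_\pm\oplus C_\pm)$; a short computation shows the relevant oracle quantities become $(C_+-C_-)\bullet\Delta_j$, $\tr(C_+)-\tr(C_-)$, and $\tr(C_+)+\tr(C_-)$, giving $\E[(C_+-C_-)\bullet\Delta_j]\geq S\lambda_{\min}(B_j)\bigl[(1-\varepsilon)\tr(C_+)-(1+\varepsilon)\tr(C_-)\bigr]$. Substituting this and expanding $\tr(C_\pm)=(1\mp 2\varepsilon)\tr(C_\pm^{0})$, the bound collects into coefficients of $\tr(C_-^{0})$ and $\tr(C_+^{0})$, and the claim reduces to checking $(1-4\varepsilon)\delta_{u,j}\geq\varepsilon(1+\varepsilon)(1+2\varepsilon)S\lambda_{\min}(B_j)$ and $(1+4\varepsilon)\delta_{\ell,j}\leq\varepsilon(1-\varepsilon)(1-2\varepsilon)S\lambda_{\min}(B_j)$. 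Plugging in the algorithm's exact choices of $\delta_{u,j}$ and $\delta_{\ell,j}$ both inequalities become equalities, which is exactly why these constants were chosen. The main obstacle is the meticulous constant-tracking; in particular, the $1\pm 4\varepsilon$ slack in the algorithm (as opposed to the tighter $1\pm 2\varepsilon$ appearing in \lemref{potentialchange}) is precisely what absorbs the auxiliary $\delta\leq 2\varepsilon$ needed to invoke \lemref{potential2}.
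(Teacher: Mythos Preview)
Your decomposition via the intermediate state $(A_j,u_{j+1},\ell_{j+1})$ is the mirror image of the paper's, which instead passes through $(A_{j+1},u_j,\ell_j)$: the paper first applies \lemref{potentialchange} with the \emph{unraised} barriers $(u_j,\ell_j)$ and then applies \lemref{potential2} at the new matrix $A_{j+1}$, using trace monotonicity (via Weyl's inequalities, since $A_{j+1}\succeq A_j$) to pass from $A_{j+1}$-based trace terms back to $A_j$-based ones. Both splits are legitimate in principle, but the paper's ordering is chosen so that the constants match \emph{exactly}, and your ordering does not achieve this with the constants as written.

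The specific slip is in the step where you collapse the $\Delta_j$-terms to $-\varepsilon(C_+-C_-)\bullet\Delta_j$. Applying \lemref{potentialchange} at the raised barriers $(u_{j+1},\ell_{j+1})$ requires $\varepsilon\Delta_j\preceq\delta(A_j-\ell_{j+1}I)^2$; since $\ell_{j+1}>\ell_j$, the oracle bound $\Delta_j\preceq(A_j-\ell_jI)^2$ only gives $\varepsilon\Delta_j\preceq\tfrac{\varepsilon}{(1-\varepsilon)^2}(A_j-\ell_{j+1}I)^2$, so you genuinely need $\delta>\varepsilon$ here (your stated $\delta=2\varepsilon$ is fine, but ``similarly on the lower side'' hides this asymmetry). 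With $\delta=2\varepsilon$, \lemref{potentialchange} yields factors $(1\pm4\varepsilon)$ in front of $\tr(C_\mp^0\Delta_j)$, whereas your displayed inequality $-\varepsilon(C_+-C_-)\bullet\Delta_j=-\varepsilon(1-2\varepsilon)\tr(C_+^0\Delta_j)+\varepsilon(1+2\varepsilon)\tr(C_-^0\Delta_j)$ corresponds to factors $(1\pm2\varepsilon)$. The discrepancy is a positive extra term $2\varepsilon^2(C_+^0+C_-^0)\bullet\Delta_j$, so your final step ``both inequalities become equalities'' does not go through with the algorithm's exact choice of $\delta_{u,j},\delta_{\ell,j}$. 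By contrast, the paper applies \lemref{potentialchange} with the original barriers $(u_j,\ell_j)$, where $\delta=\varepsilon$ holds on the nose and the factors $(1\pm2\varepsilon)$ cancel perfectly against the definition of $C_\pm$; the looser $\delta=2\varepsilon$ (hence $(1\pm4\varepsilon)$) is reserved for \lemref{potential2}, and that is exactly why $\delta_{u,j},\delta_{\ell,j}$ carry the factors $\tfrac{1}{1\mp4\varepsilon}$.

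Your route can be repaired: one can bound the stray term by $2\varepsilon^2(C_+^0+C_-^0)\bullet\Delta_j\le 2\varepsilon^2\,\tr(C_+^0+C_-^0)$ (using $\Delta_j\preceq(u_jI-A_j)^2$, $\Delta_j\preceq(A_j-\ell_jI)^2$, and $\ce^{1/x}\le \ce^{1/x}/x^2$ for $x\le1$), and then absorb it into the error budget. But this needs either a slightly different choice of $\delta_{u,j},\delta_{\ell,j}$ or an extra paragraph of bookkeeping; the paper's ordering avoids all of this.
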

\begin{proof}
By the description of Algorithm~\ref{algo1} and Definition~\ref{deforacle}, it holds that
\[
\Delta_j\oplus\Delta_j \preceq (u_{j}\mi-\ma_{j})^{2}\oplus(\ma_{j}-\ell_{j}\mi)^{2},
\]
which implies that
$\Delta_{j}\preceq  (u_{j}\mi-\ma_{j})^{2}$ and $\Delta_{j}\preceq  (\ma_{j}-\ell_{j}\mi)^{2}$. Since $u_j-\ell_j \leq 1$ by the algorithm description and $0\leq\eps\leq 1/20$, by   setting $\Delta=\varepsilon\cdot\Delta_j$ in \lemref{potentialchange}, we have 
\begin{align*}
 \Phi_{u_j, \ell_j} (\ma_{j} + \varepsilon\cdot \Delta_j)
& \leq  \Phi_{u_{j},\ell_{j}}(\ma_{j})   + \varepsilon (1+2\varepsilon)\cdot \tr\left(\mathrm{e}^{(u_{j}\mi-\ma_{j})^{-1}}(u_{j}\mi-\ma_{j})^{-2}\Delta_{j}\right)\\
 &  \qquad\qquad - \varepsilon (1-2\varepsilon)\cdot \tr\left(\mathrm{e}^{(\ma_{j}-\ell_{j}\mi)^{-1}}(\ma_{j}-\ell_{j}\mi)^{-2}\Delta_{j}\right)\\
 & = \Phi_{u_{j},\ell_{j}}(\ma_{j})-\varepsilon\cdot\cs\bullet\Delta_{j}.
\end{align*}
Notice that the matrices $\{M_i\oplus M_i\}_{i=1}^m$ as the input of $\mathtt{Oracle}$ always satisfy $\sum_{i=1}^mM_i\oplus M_i=I\oplus I$.
Using this and the definition of $\mathtt{Oracle}$, we know that
\[
\E\left[\cs\bullet\Delta_j\right] \geq 
 S\cdot\lambda_{\min}(B_{j})\cdot\tr (\cs)-\eps\cdot S\cdot\lambda_{\min}(B_{j})\cdot\tr\left( \ca\right),
\]
Let $\alpha_{j}=\varepsilon\cdot S\cdot\lambda_{\min}(\mb_{j})$. Then we have that 
\begin{align}
\E\left[\Phi_{u_{j},\ell_{j}}(\ma_{j+1}) \right] &\leq \Phi_{u_{j},\ell_{j}}(\ma_{j}) - \varepsilon\cdot \E\left[\cs\bullet\Delta_{j}\right]\nonumber  \\
& \leq \Phi_{u_{j},\ell_{j}}(\ma_{j})  +  (1+2\varepsilon) (1+\eps)\cdot \alpha_j\cdot \tr\left(\mathrm{e}^{(u_{j}\mi-\ma_{j})^{-1}}(u_{j}\mi-\ma_{j})^{-2}\right) \nonumber\\
 &  \qquad\qquad - (1-2\varepsilon)  (1-\eps)\cdot \alpha_j\cdot \tr\left(\mathrm{e}^{(\ma_{j}-\ell_{j}\mi)^{-1}}(\ma_{j}-\ell_{j}\mi)^{-2}\right).\label{eq:echange1}
\end{align}

On the other hand, using that $0\leq\eps \leq 1/20$, $S \leq 1$ and $\Delta_{j}\preceq  (u_{j}\mi-\ma_{j})$, we have that
\[
\delta_{u,j} \leq \varepsilon\cdot\frac{(1+2\varepsilon)(1+\eps)}{1-4\varepsilon} \cdot \lambda_{\min}(u_{j}\mi-\ma_{j})^{2}\leq 2 \eps \cdot \lambda_{\min}(u_{j}\mi-\ma_{j+1})^{2}
\]
and
\[
\delta_{\ell,j} \leq \varepsilon\cdot\frac{(1-2\varepsilon)(1-\eps)}{1+4\varepsilon} \cdot \lambda_{\min}(\ma_{j}-\ell_{j}\mi)^{2}\leq 2 \eps \cdot \lambda_{\min}(\ma_{j+1}-\ell_{j}\mi)^{2}.
\]
Hence, \lemref{potential2} shows that
\begin{align}
\Phi_{u_{j} +\delta_{u,j},\ell_{j}+\delta_{\ell,j}}(\ma_{j+1})& \leq  \Phi_{u_{j},\ell_{j}}(\ma_{j+1})
 - (1 - 4\varepsilon)\delta_{u,j}\cdot \tr\left(\mathrm{e}^{(u_{j}\mi-\ma_{j+1})^{-1}}(u_{j}\mi-\ma_{j+1})^{-2}\right)\nonumber \\
 & \qquad \qquad\qquad  + (1 + 4\varepsilon)\delta_{\ell,j}
\cdot \tr\left(\mathrm{e}^{(\ma_{j+1}-\ell_{j}\mi)^{-1}}(\ma_{j+1}-\ell_{j}\mi)^{-2}\right) \nonumber \\
 & \leq  \Phi_{u_{j},\ell_{j}}(\ma_{j+1}) - (1 - 4\varepsilon)\delta_{u,j}\cdot \tr\left(\mathrm{e}^{(u_{j}\mi-\ma_{j})^{-1}}(u_{j}\mi-\ma_{j})^{-2}\right)\nonumber\\
 &  \qquad\qquad\qquad + (1 + 4\varepsilon)\delta_{\ell,j}\cdot \tr\left(\mathrm{e}^{(\ma_{j}-\ell_{j}\mi)^{-1}}(\ma_{j}-\ell_{j}\mi)^{-2}\right).\label{eq:echange2}
\end{align}
By combining \eq{echange1}, \eq{echange2}, and setting $(1-4\varepsilon)\delta_{u,j}=(1+2\varepsilon)(1+\eps)\alpha_{j}$,
 $(1+4\varepsilon)\delta_{\ell,j}=(1-2\varepsilon)(1-\eps)\alpha_{j}$, we have that
$
\mathbb{E}\left[\Phi_{u_{j+1},\ell_{j+1}}(\ma_{j+1})\right]\leq\Phi_{u_{j},\ell_{j}}(\ma_{j}).
$
\end{proof}

\begin{proof}[Proof of \lemref{reductionlem}]
We first bound the number of times the algorithm calls the oracle. 
Notice that
$
\Phi_{u_0, \ell_0} = 2\cdot \tr\exp\left(\left(\frac{1}{4}I\right)^{-1}\right) = 2 \ce^4\cdot n.
$
Hence, by \lemref{potential_decreasing} we have  $\mathbb{E}\left[ \Phi_{u_j, \ell_j} (A_j) \right]=O(n)$ for any iteration $j$. By Markov's inequality, it holds that $ \Phi_{u_j, \ell_j}   (A_j)=n^{O(1)}$ with high probability in $n$.
In the remainder of the proof, we assume that this event occurs.

Since $\mb_{j}=(u_{j}\mi-\ma_{j})^{2}\oplus(\ma_{j}-\ell_{j}\mi)^{2}$ by definition, it holds that 
\[
\exp\left(  (\lambda_{\min}\left(\mb_j \right))^{-1/2}\right) \leq \Phi_{u_j, \ell_j}(A_j) =   n^{O(1)},
\]
which implies that 
\begin{equation}\label{eq:lblambdab}
\lambda_{\min}\left( \mb_j \right)=\Omega\left(\log^{-2}n\right).
\end{equation}
On the other hand, in iteration $j$ the gap between $u_j$ and $\ell_j$ is increased by
\begin{equation}\label{eq:gapincrease}
\delta_{u,j}-\delta_{\ell,j}=\Omega\left(\varepsilon^{2}\cdot S\cdot\lambda_{\min}(\mb_{j}) \right).
\end{equation}
Combining this with \eq{lblambdab} gives us that
 \begin{equation*}
\delta_{u,j}-\delta_{\ell,j}  =  \Omega\left(\frac{\varepsilon^{2}\cdot S}{\log^{2}n}\right)
\end{equation*}
for any $j$.
Since $u_0-\ell_0=1/2$ and the algorithm terminates once $u_j - \ell_j>1$ for some $j$, with high probability in $n$, the algorithm terminates in
$
O\left( \frac{\log^2 n}{\varepsilon^2\cdot S}  \right)
$
iterations.

Next we prove that the number of $M_i$'s involved in the output is at most  $O\left(\frac{n}{\eps^2\cdot S} \right)$.
By the properties of $\mathtt{Oracle}$, the number of matrices in iteration $j$ is at most 
$\lambda_{\min}(B_j)\cdot\tr( B_j^{-1})$. Since $x^{-2}\leq \exp\left( x^{-1}\right)$ for all $x>0$, it holds for any iteration $j$ that
\begin{align*}
\tr\left(\mb_{j}^{-1}\right) & =\tr \left(\left(u_{j}\mi-\ma_{j}\right)^{-2}\right) +\tr \left(\left(\ma_{j}-\ell_{j}\mi\right)^{-2}\right) \leq\Phi_{u_{j},\ell_{j}}(\ma_{j})
\end{align*}
By \eq{gapincrease}, we know that for added matrix $M_i$ in iteration $j$, the average increase of the  gap $u_j-\ell_j$ for each added matrix is 
$
\Omega\left(\frac{\varepsilon^{2}\cdot S}{\Phi_{u_{j},\ell_{j}}(\ma_j)} \right)$.
Since $\mathbb{E}\left[\Phi_{u_j, \ell_j}(A_j)\right]=O(n)
$, for every new added matrix, in expectation the gap between $u_j$ and $\ell_j$ is increased by
$\Omega\left(\frac{\varepsilon^{2}\cdot S}{ n} \right)$.
By the ending condition of the algorithm, i.e., $u_j-\ell_j>1$, and Markov's inequality, the number of matrices picked in total is at most 
$
O \left(\frac{n}{\varepsilon^2\cdot S} \right)
$
with constant probability.

Finally we prove that the output is a $(1+O(\varepsilon))$-spectral sparsifier. Since the condition number of the output matrix $A_j$ is at most
\[
\frac{u_j}{\ell_j} =
\left( 1 - \frac{u_j-\ell_j}{u_j} \right)^{-1},
\]
it suffices to prove that
$(u_j-\ell_j)/u_j=O(\varepsilon)$ and this  easily follows from the ending condition of the algorithm and
\[
\frac{\delta_{u,j} -\delta_{\ell, j}}{\delta_{u,j}}= O(\eps).
\]
\end{proof}


\subsection{Existence proof for $\oracle$}

\begin{proof}[Proof of \lemref{one_sided}]
The property on $\mathrm{nnz}\left(\alpha\right)$ follows from the algorithm description.
For the second property, notice that  every chosen matrix $\Delta_j$ in iteration $j$ satisfies 
$\Delta_{j}\preceq\frac{1}{2}(u_{j}\mb-\ma_{j})$,
which implies that 
$\ma_{j}\preceq u_{j}\mb$ holds for any iteration $j$. Hence,
$\ma=\frac{1}{u_{T}}\ma_{T}\preceq\mb$, and $\alpha_i \geq 0$ since $\Psi_{j} \geq 0$.

Now we prove the third statement. Let 
\[
\beta=\sum_{i=1}^m \left(\mm_{i}\bullet\cs\right)^+.
\]
Then, for  each matrix $\mm_{i_{j}}$ picked in iteration $j$, $\mc\bullet\ma_{j}$ is increased by
\begin{align*}
\cs\bullet\Delta_{j} & =\frac{1}{4\Psi_{j}\cdot \mathsf{prob}(M_{i_j})}\cdot \cs\bullet\mm_{i_{j}}=\frac{\beta}{4\Psi_{j}}.
\end{align*}
On the other hand, it holds that
\[
u_T= u_0 + \sum_{j=0}^{T-1} \delta_j = 1+\sum_{j=0}^{T-1} \left( \Psi_j\cdot\lambda_{\min}(B) \right)^{-1}.
\]
Hence,
we have that
\begin{align}
\cs\bullet\ma & =\frac{1}{u_T}\cdot C\bullet \left(\sum_{j=0}^{T-1} \Delta_j \right) 
= \frac{\sum_{j=0}^{T-1}\beta\cdot(4\Psi_{j})^{-1}}{1+\sum_{j=0}^{T-1} (\Psi_{j}\cdot\lambda_{\min}(\mb))^{-1}} \nonumber\\
& =  \frac{\beta\lambda_{\min}(\mb) }{4}\cdot\frac{\sum_{j=0}^{T-1}\Psi_{j}^{-1}}{\lambda_{\min}(\mb) +\sum_{j=0}^{T-1}\Psi_{j}^{-1}}\nonumber \\
 & \geq \frac{\beta\lambda_{\min}(\mb)} {4}\cdot\frac{\sum_{j=0}^{T-1}(\Psi_{j}+\Psi_{0})^{-1}}{\lambda_{\min}(\mb) +\sum_{j=0}^{T-1}(\Psi_{j}+\Psi_{0})^{-1}} \nonumber \\
 & \geq \frac{\beta\lambda_{\min}(\mb) }{4}\cdot\frac{\sum_{j=0}^{T-1}(\Psi_{j}+\Psi_{0})^{-1}}{\lambda_{\min}(\mb) + T \cdot \Psi_{0}^{-1}} 
 \geq \frac{\beta}{8}\cdot\sum_{j=0}^{T-1}\left(\Psi_{j}+\Psi_{0}\right)^{-1}\label{eq:CA_bound},
\end{align}
where the last inequality follows by the choice of $T$.
Hence, it suffices to bound $\Psi_j$. 

Since $\Delta_{j}\preceq\frac{1}{2}(u_{j}\mb-\ma_{j})\preceq\frac{1}{2}(u_{j+1}\mb-\ma_{j})$,
we have that 
\begin{equation}\label{eq:mideq1}
\Psi_{j+1} \leq \tr \left((u_{j+1}\mb-\ma_{j})^{-1} \right)+2\cdot \Delta_{j}\bullet(u_{j+1}\mb-\ma_{j})^{-2}.
\end{equation}
Since $\tr(u\mb-\ma_{j})^{-1}$ is convex in $u$, we have that
\begin{equation}\label{eq:mideq2}
\tr\left((u_{j} \mb-\ma_{j})^{-1}\right) \geq \tr\left((u_{j+1} \mb-\ma_{j})^{-1}\right) + \delta_j \tr\left((u_{j+1} \mb-\ma_{j})^{-2} B\right)
\end{equation}
Combining \eq{mideq1} and \eq{mideq2}, we have that 
\begin{align}
\Psi_{j+1}  & \leq \tr \left(\left(u_jB-A_j\right)^{-1}\right) -\delta_j\cdot\mathrm{tr}\left( \left( u_{j+1} B -A_j\right)^{-2}B \right) +2\cdot \Delta_{j}\bullet(u_{j+1}\mb-\ma_{j})^{-2} \nonumber\\
& = \Psi_j  -\delta_j\cdot\lambda_{\min}(B)\cdot \tr\left((u_{j+1}B-A_j)^{-2}\right)+2\cdot \Delta_{j}\bullet(u_{j+1}\mb-\ma_{j})^{-2}  \label{eq:upphinext}
\end{align}
Let $\mathcal{E}_{j}$ be the event that $\Delta_{j}\preceq\frac{1}{2}(u_{j}\mb-\ma_{j})$.  Notice that our picked $\Delta_j$ in each iteration always satisfies $\mathcal{E}_{j}$ by algorithm description.
 Since 
\[
\E\left[\Delta_{j}\bullet(u_{j}\mb-\ma_{j})^{-1}\right]=\sum_{i: (M_i\bullet C)^+>0}
\mathsf{prob}(M_i)\cdot
\frac{1}{4\Psi_{j}\cdot\mathsf{prob}(M_i)  }\cdot\mm_{i}\bullet(u_{j}\mb-\ma_{j})^{-1}\leq\frac{1}{4},
\]
by Markov inequality it holds that 
\[
\mathbb{P}\left[\mathcal{E}_{j} \right]= \mathbb{P}\left[ \Delta_{j}\preceq\frac{1}{2}(u_{j}\mb-\ma_{j})\right] \geq\frac{1}{2},
\]
and therefore
\begin{align*}
\E\left[\Delta_{j}\bullet(u_{j+1}\mb-\ma_{j})^{-2}~|~\mathcal{E}_{j}\right] & \leq\frac{\E\left[\Delta_{j}\bullet(u_{j+1}\mb-\ma_{j})^{-2}\right]}{\mathbb{P}\left(\mathcal{E}_{j}\right)} \\
& \leq 2\cdot \E\left[\Delta_{j}\bullet(u_{j+1}\mb-\ma_{j})^{-2}\right]\\
 &= \frac{1}{2\Psi_{j}}\cdot \sum_{i: (M_i\bullet C)^+>0}\mm_{i}\bullet(u_{j+1}\mb-\ma_{j})^{-2}\\
 &\leq \frac{1}{2\Psi_{j}} \cdot \tr(u_{j+1}\mb-\ma_{j})^{-2}.
\end{align*} 
Combining the inequality above,  \eq{upphinext}, and the fact that every $\Delta_j$ picked by the algorithm satisfies $\mathcal{E}$, we have that 
$$
\E\left[ \Psi_{j+1} \right]\leq \Psi_{j}+\left(\frac{1}{\Psi_{j}}-\delta_{j}\cdot \lambda_{\min}(\mb)\right)\cdot \tr\left(u_{j+1}\mb-\ma_{j} \right)^{-2}.
$$
By our choice of $\delta_j$, it holds for any iteration $j$ that $\E\left[ \Psi_{j+1}\right] \leq \Psi_j$, and 
\[
\E\left[ \left(\Psi_{j+1} + \Psi_0 \right)^{-1}\right]\geq \E\left(\Psi_{j} + \Psi_0 \right)^{-1} \geq \frac{1}{2\cdot \Psi_0}. 
\]
Combining this with \eq{CA_bound}, it holds that 
\[\E\left[\cs\bullet\ma\right] \geq \frac{\beta}{8}\sum_{j=0}^{T-1}\E\left[(\Psi_{j}+\Psi_{0})^{-1} \right]
 \geq \frac{ \beta }{16}\cdot\frac{T}{\Psi_0} = \frac{ \beta }{16}\cdot\frac{T}{\tr\left( \mb^{-1}\right)} \geq \frac{ \tr(\cs) }{16}\cdot\frac{T}{\tr\left( \mb^{-1} \right)} .
\]
The result follows from the fact that $T \geq \lambda_{\min}(\mb) \tr\left( \mb^{-1}\right)/2$.
\end{proof}
Using the lemma above, we can prove that such $\ma$ can be solved
by a semidefinite program.

\begin{proof}[Proof of \thmref{sdp}]
Note that the probability we used in the statement is the same as the probability we used in $\mathtt{SolutionExistence}\left(
\mathcal{M},\mb,\cs\right)$. Therefore, Lemma \ref{lem:one_sided} shows that there is a matrix $A$ of the form $\sum_{i=1}^m\alpha_{i}\mm_{i}$ such that 
\[
\E\left[ \cs\bullet\ma\right] \geq \frac{1}{32}\cdot\lambda_{\min}(\mb)\cdot  \tr\left(\cs\right),
\]
The statement follows by the fact that $A^{\star}$ is the solution of the semidefinite program \eq{SDP} that maximises $\cs\bullet\ma$.
\end{proof}

\subsection{Implementing the SDP in nearly-linear time\label{sec:sdp}}

Now, we discuss how to solve the SDP \eq{SDP} in nearly-linear time.
Since this SDP is a packing SDP, it is known how to solve it in nearly-constant depth \cite{jain2011parallel,allen2016using,peng2012faster}.
The following result will be used in our analysis.
\begin{thm}
[\cite{allen2016using}]\label{thm:solve_SDP}Given a SDP 
\[
\max_{x\geq0}c^{\rot}x\text{ subject to }\sum_{i=1}^{m}x_{i}\ma_{i}\preceq\mb
\]
with $\ma_{i}\succeq\mzero$, $\mb\succeq\mzero$ and $c\in\R^{m}$.
Suppose that we are given a direct access to the vector $c\in\Rset^m$ and an
indirect access to $\ma_{i}$ and $\mb$ via an oracle $\mathcal{O}_{L,\delta}$
which inputs a vector $x\in\R^{m}$ and outputs a vector $v\in\R^{m}$
such that
\[
v_{i}\in\left(1\pm\frac{\delta}{2}\right)\left[\ma_{i}\bullet\mb^{-1/2}\exp\left(L\cdot\mb^{-1/2}\left(\sum_{i}x_{i}\ma_{i}-\mb\right)\mb^{-1/2}\right)\mb^{-1/2}\right]
\]
in $\mathcal{W}_{L,\delta}$ work and $\mathcal{D}_{L,\delta}$ depth for any $x$ such that $x_i\geq0$
and $\sum_{i=1}^{m}x_{i}\ma_{i}\preceq2\mb$. Then, we can output
$x$ such that 
\[
\E\left[c^{\rot}x \right]\geq(1-O(\delta))\mathsf{OPT}\quad\text{with}\quad\sum_{i=1}^{m}x_{i}\ma_{i}\preceq\mb
\]
 in
\[
O\left(\mathcal{W}_{L,\delta}\log m\cdot\log\left(nm/\delta\right)/\delta^{3}\right) \text{work}\quad\text{and}\quad O\left(\mathcal{D}_{L,\delta}\log m\cdot\log(nm/\delta)/\delta\right) \text{depth},
\]
where $L=(4/\delta)\cdot\log(nm/\delta)$.
\end{thm}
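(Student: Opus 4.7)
The plan is to attack this via the matrix multiplicative weights (MMW) framework specialized to packing SDPs, since the constraints $\sum_i x_i A_i \preceq B$ with $A_i, B \succeq \mathbf{0}$ are exactly of this form. The key algorithmic idea is to run mirror descent on $x \geq 0$ with entropic (multiplicative) regularization, while the ``dual'' weights are given implicitly by a matrix exponential of the current slack. Concretely, I would define the soft-max potential
\[
\Phi(x) = \tr \exp\!\left(L \cdot B^{-1/2}\Bigl(\sum_{i=1}^m x_i A_i - B\Bigr) B^{-1/2}\right),
\]
whose gradient with respect to $x_i$ is precisely $L$ times the quantity the oracle $\mathcal{O}_{L,\delta}$ returns (up to a $(1\pm\delta/2)$ factor). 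The choice $L = \Theta(\log(nm/\delta)/\delta)$ is the standard smooth-max temperature: it makes $\tfrac{1}{L}\log\Phi(x)$ uniformly approximate $\lambda_{\max}\bigl(B^{-1/2}(\sum_i x_i A_i - B) B^{-1/2}\bigr)$ to additive error $\delta$.

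The algorithm would proceed in rounds: at round $t$, call $\mathcal{O}_{L,\delta}$ on the current iterate $x^{(t)}$, obtain approximate gradients $v^{(t)}_i$, and perform a multiplicative update of the form $x^{(t+1)}_i \leftarrow x^{(t)}_i \cdot \exp(\eta (c_i/\|c\|_\infty - \gamma v^{(t)}_i))$ for suitable step size $\eta = \Theta(\delta)$ and scaling $\gamma$, followed by a rescaling to ensure $\sum_i x_i A_i \preceq 2B$ so that the oracle's precondition continues to hold. The convergence proof is a standard two-pronged mirror-descent argument: (a) a regret bound on $x$ from entropic regularization bounds $\sum_t \langle v^{(t)}, x^{(t)} - x^*\rangle$ by $O(\log m / \eta)$, and (b) the monotonicity of $\Phi$ under the update (via a second-order expansion of the matrix exponential, using the Golden--Thompson inequality as in Lemma~\ref{lem:Golden}) gives feasibility of the averaged iterate $\bar x = \tfrac{1}{T}\sum_t x^{(t)}$. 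Setting $T = \widetilde O(1/\delta^2)$ rounds balances the two error sources so that $\mathbb{E}[c^\top \bar x] \geq (1 - O(\delta)) \mathsf{OPT}$ while $\sum_i \bar x_i A_i \preceq B$.

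The work and depth bounds then follow by accounting: each round costs one oracle call ($\mathcal{W}_{L,\delta}$ work, $\mathcal{D}_{L,\delta}$ depth) plus $O(m)$ coordinate updates; since $T = \widetilde O(1/\delta^2)$ and one additional $\log m$ factor absorbs the normalization/rescaling step, the stated $O(\mathcal{W}_{L,\delta} \log m \cdot \log(nm/\delta) / \delta^3)$ work and analogous depth follow. The main obstacle I expect is controlling how the oracle's \emph{multiplicative} $(1 \pm \delta/2)$ error compounds through the \emph{multiplicative} update: naively the errors stack across $T = \widetilde O(1/\delta^2)$ rounds and would break the guarantee, so one must show that only the \emph{averaged} regret matters and that the symmetric multiplicative error cancels to first order. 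A secondary subtlety is ensuring the oracle precondition $\sum_i x_i A_i \preceq 2B$ is maintained throughout despite aggressive multiplicative updates; this is typically handled by a ``truncation'' or explicit projection step whose cost is folded into the $\log(nm/\delta)$ factors. With these two issues resolved, the remainder of the proof is the standard MMW machinery.
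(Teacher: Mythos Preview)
The paper does not prove this theorem at all: it is quoted verbatim as a black-box result from \cite{allen2016using} and is used only as a tool inside the proof of Lemma~\ref{lem:oracle_implementation}. There is therefore no ``paper's own proof'' to compare your proposal against.

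That said, your sketch is a reasonable high-level summary of how packing-SDP solvers of this type are actually built in \cite{allen2016using,jain2011parallel,peng2012faster}: a soft-max potential $\Phi(x)=\tr\exp(L\cdot B^{-1/2}(\sum_i x_iA_i-B)B^{-1/2})$ whose gradient is exactly what the oracle $\mathcal{O}_{L,\delta}$ approximates, multiplicative-weight updates on $x$, a regret bound giving the $(1-O(\delta))\mathsf{OPT}$ guarantee, and feasibility via monotonicity of $\Phi$. The two ``obstacles'' you flag (error compounding across rounds, and maintaining $\sum_i x_iA_i\preceq 2B$) are precisely the technical points handled in those references, and resolving them carefully is most of the actual work; your sketch correctly identifies them but does not resolve them. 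If you want to turn this into a self-contained proof rather than a citation, you would need to spell out the width-reduction/truncation mechanism and the averaged-regret argument in full.
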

Since we are only interested in a fast implementation of the one-sided oracle used 
in $\ensuremath{\mathtt{Sparsify}\left(\calM,\varepsilon\right)}$,
it suffices to solve the SDP \eq{SDP} for this particular situation. 

\begin{proof}[Proof of \lemref{oracle_implementation}]
Our basic idea is to use \thmref{solve_SDP} as the one-sided oracle.
Notice that each iteration of $\ensuremath{\mathtt{Sparsify}\left(\calM,\varepsilon\right)}$
uses the one-sided oracle with the input 
\begin{align*}
\cp & =\frac{1-2\varepsilon}{2}\left((\ma-\ell\mi)^{-2}\exp(\ma-\ell\mi)^{-1}\oplus(\ma-\ell\mi)^{-2}\exp(\ma-\ell\mi)^{-1}\right),\\
\cn & =\frac{1+2\varepsilon}{2}\left((u\mi-\ma)^{-2}\exp(u\mi-\ma)^{-1}\oplus(u\mi-\ma)^{-2}\exp(u\mi-\ma)^{-1}\right),\\
B & =(u\mi-\ma)^{2}\oplus(\ma-\ell\mi)^{2},
\end{align*}
where we drop the subscript $j$ indicating iterations here for simplicity.
To apply~\thmref{sdp}, we first sample a subset $S\subseteq [m]$,
then solve the SDP
\[
\max_{
\beta_{i}\geq0,\beta_{i}=0\text{ on }i\notin S
}(\cp-\cn)\bullet\left(\sum_{i=1}^m\beta_{i}\mm_{i}\oplus\mm_{i}\right)\quad \emph{ subject to } \sum_{i=1}^m\beta_{i}\mm_{i}\oplus\mm_{i}\preceq\mb.
\]
By ignoring the matrices $\mm_{i}$ with $i\not\in S$, this SDP
is equivalent to the SDP 
\[
\max_{
\beta_{i}\geq0}c^{\top}\beta \quad\emph{ subject to }\sum_{i=1}^m\beta_{i}\mm_{i}\oplus\mm_{i}\preceq\mb
\]
where $c_{i}=(\cp-\cn)\bullet(\mm_{i}\oplus\mm_{i})$. Now assume that  
 (i) we can approximate $c_{i}$ with $\delta(\cp+\cn)\bullet(\mm_{i}\oplus\mm_i)$ additive
error, and (ii) for any $x$ such that $\sum_{i=1}^mx_{i}\mm_{i}\oplus\mm_{i}\preceq2\mb$ and $L =\tilde{O}(1/\delta)$,
we can approximate
\[
(\mm_{i}\oplus\mm_{i})\bullet\mb^{-1/2}\exp\left(L\cdot\mb^{-1/2}\left(\sum_{i}x_{i}\mm_{i}\oplus\mm_{i}-\mb\right)\mb^{-1/2}\right)\mb^{-1/2}
\]
with $1\pm\delta$ multiplicative error.
Then, by \thmref{solve_SDP} we can find a vector $\beta$
such that
\begin{align*}
\E\left[ c^{\rot}\beta\right] & \geq(1-O(\delta))\mathsf{OPT}-\delta\sum_{i=1}^m\beta_{i}(\cp+\cn)\bullet(\mm_{i}\oplus\mm_{i})\\
 & \geq(1-O(\delta))\mathsf{OPT}-\delta\sum_{i=1}^m\beta_{i}(\cp+\cn)\bullet\mb\\
 & \geq\mathsf{OPT}-O(\delta)(\cp+\cn)\bullet\mb.
\end{align*}
where we used that $\mm_{i}\oplus\mm_{i}\preceq\mb$ and $\mathsf{OPT}\leq(\cp+\cn)\bullet\mb$.
Since $u-\ell\leq1$, we have that $B\preceq I\oplus I$ and hence
\begin{align*}
\E\left[ c^{\rot}\beta\right] & \geq\mathsf{OPT}-O(\delta)\cdot\tr(\cp+\cn)\\
 & \geq\frac{1}{32}\lambda_{\min}(\mb)\cdot\left(\tr(\mc)-O(\delta\log^{2}n)\cdot\tr(\cp+\cn)\right)
\end{align*}
where we apply~\thmref{sdp} and (\ref{eq:lblambdab}) at
the last line. Therefore, this gives an oracle with speed $1/32$
 and $\varepsilon$ error by setting $\delta=\varepsilon/\log^{2}n$.

The problem of approximating sample probabilities, $\{c_{i}\}$, as well as implementing the oracle $\mathcal{O}_{L,\delta}$
is similar with approximating leverage scores~\cite{SS}, and relative leverage scores~\cite{zhu15,LS15}. All these references  use 
the Johnson-Lindenstrauss lemma
to reduce the problem of approximating
matrix dot product or trace to matrix vector multiplication. The only
difference is that, instead of computing $(\ma-\ell\mi)^{-(q+1)}x$ and $(u\mi-\ma)^{-(q+1)}x$ for a given vector $x$ in other references, we compute $(\ma-\ell\mi)^{-2}\exp(\ma-\ell\mi)^{-1}x$ and 
$(u\mi-\ma)^{-2}\exp(u\mi-\ma)^{-1}x$. These can be approximated by Taylor expansion and the number of terms required for Taylor expansion depends on how close the eigenvalues
of $A$ are to the boundary ($u$ or $\ell$). In particular, we show in \secref{taylor} that $\tilde{O}(1/g^2)$ terms in Taylor expansion suffices,
where the gap $g$ is the largest number such that $(\ell +g)I \preceq A \preceq (u - g)I$. Since $1/g^2 = \tilde{O}(1)$ by (\ref{eq:lblambdab}), each iteration can be implemented via solving $\tilde{O}\left(1/\eps^{O(1)}\right)$  linear systems and $\tilde{O}\left(1/\eps^{O(1)}\right)$  matrix vector multiplication. For the matrices coming from graph sparsification, this can be done in nearly-linear work and nearly-constant depth~\cite{PengS14,KyngLPSS16}. For general matrices, this can be done in input sparsity time and nearly-constant depth~\cite{nelson2013osnap,li2013iterative,cohen2015uniform}.
\end{proof}

\subsection{Taylor Expansion of $x^{-2} \exp(x^{-1})$ \label{sec:taylor}}

\begin{theorem}
[Cauchy's Estimates]\label{thm:cauchy_estimate}Suppose $f$ is holomorphic
on a neighborhood of the ball $B\triangleq\{z\in\mathbb{C}\ :\ \left|z-s\right|\leq r\}$,
then we have that
\[
\left|f^{(k)}(s)\right|\leq\frac{k!}{r^{k}}\sup_{z\in B}\left|f(z)\right|.
\]
\end{theorem}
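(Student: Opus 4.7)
The plan is to invoke Cauchy's integral formula for the $k$-th derivative and then estimate the resulting contour integral in a routine fashion. Since $f$ is holomorphic on a neighborhood of the closed ball $B$, it is in particular holomorphic on an open set containing the circle $\gamma = \{z \in \mathbb{C} : |z-s| = r\}$, so Cauchy's integral formula for derivatives applies:
\[
f^{(k)}(s) = \frac{k!}{2\pi i} \oint_{\gamma} \frac{f(z)}{(z-s)^{k+1}}\, dz.
\]

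The second step is to take absolute values inside the integral and apply the standard $ML$-estimate. On the contour $\gamma$ we have $|z-s| = r$, hence $|(z-s)^{k+1}| = r^{k+1}$; the length of $\gamma$ is $2\pi r$; and $|f(z)| \leq \sup_{w\in B} |f(w)|$ because $\gamma \subset B$. Putting these together yields
\[
|f^{(k)}(s)| \leq \frac{k!}{2\pi} \cdot \frac{\sup_{z\in B}|f(z)|}{r^{k+1}} \cdot 2\pi r = \frac{k!}{r^k}\sup_{z \in B}|f(z)|,
\]
which is exactly the desired bound.

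There is essentially no obstacle here: the proof reduces to Cauchy's integral formula for derivatives (itself a textbook consequence of the Cauchy integral formula via differentiation under the integral sign) and the elementary $ML$-estimate for contour integrals. The only mild care needed is ensuring that the hypothesis "$f$ is holomorphic on a neighborhood of $B$" is strong enough to justify differentiating under the integral and to guarantee that the circle of radius $r$ lies inside the domain of holomorphy, both of which are immediate from the hypothesis.
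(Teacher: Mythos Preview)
Your proof is correct and is the standard textbook argument for Cauchy's estimates. The paper does not actually prove this statement; it is quoted as a classical result from complex analysis and used as a black box in the subsequent lemma on the Taylor expansion of $x^{-2}\exp(x^{-1})$.
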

\begin{lemma}
\label{lem:apr_exp}Let $f(x)=x^{-2}\exp(x^{-1})$. For any $0<x\leq 1$,
we have that
\[
\left|f(x)-\sum_{k=0}^{d}\frac{1}{k!}f^{(k)}(1)(x-1)^{k}\right|\leq8(d+1)\mathrm{e}^{\frac{5}{x}-xd}.
\]
In particular, if $d\geq\frac{c}{x^{2}}\log(\frac{1}{x\varepsilon})$
for some large enough universal constant $c$, we have that
\[
\left|f(x)-\sum_{k=0}^{d}\frac{1}{k!}f^{(k)}(1)(x-1)^{k}\right|\leq\varepsilon.
\]
\end{lemma}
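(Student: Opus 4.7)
The strategy is to bound the Taylor remainder
\[
R_d(x) \;=\; f(x) - \sum_{k=0}^{d}\frac{f^{(k)}(1)}{k!}(x-1)^{k}
\]
via complex analysis, leveraging that $f(z)=z^{-2}\exp(z^{-1})$ is holomorphic on $\mathbb{C}\setminus\{0\}$ and then invoking Theorem~\ref{thm:cauchy_estimate}. First I would fix a circle $|z-1|=r$ with $0<r<1$; on this circle the minimum distance from $z$ to the origin is $1-r$, so $|z^{-1}|\le (1-r)^{-1}$ and hence $|f(z)|\le(1-r)^{-2}\exp\!\bigl((1-r)^{-1}\bigr)$. Cauchy's Estimates then yield the derivative bound
\[
\bigl|f^{(k)}(1)\bigr|\;\le\;\frac{k!}{r^{k}(1-r)^{2}}\,\exp\!\Bigl(\tfrac{1}{1-r}\Bigr).
\]

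Next I would control $R_d(x)$ by summing the Taylor tail. Provided $|x-1|<r$, the crude triangle-inequality bound gives
\[
|R_d(x)|\;\le\;\sum_{k=d+1}^{\infty}\frac{|f^{(k)}(1)|}{k!}|x-1|^{k}\;\le\;\frac{1}{(1-r)^{2}}\exp\!\Bigl(\tfrac{1}{1-r}\Bigr)\cdot\frac{(|x-1|/r)^{d+1}}{1-|x-1|/r}.
\]
For $x\in(0,1]$ we have $|x-1|=1-x$, and I would choose the radius $r=1-x/c$ for a small constant $c$ (say $c=5$ to match the form of the claim) so that $1-r=x/c$ and $(1-x)/r$ admits the estimate
\[
\frac{1-x}{\,1-x/c\,}\;=\;1-\frac{(c-1)x/c}{1-x/c}\;\le\;\exp\!\Bigl(-\tfrac{c-1}{c}x\Bigr).
\]
Substituting produces a bound of the shape $(c/x)^{2}e^{c/x}\cdot e^{-(c-1)xd/c}\cdot\bigl(1-e^{-(c-1)x/c}\bigr)^{-1}$.

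The main obstacle is then the bookkeeping needed to collapse this estimate to the stated form $8(d+1)\,e^{5/x-xd}$. Two devices do the job: (i) the geometric-series denominator $1-(1-x)/r$ and the prefactor $1/x^{2}$ can be absorbed into the polynomial factor $(d+1)$ together with a modest inflation of the exponent in $e^{5/x}$ (using $1/x\le e^{1/x}$ on $(0,1]$), and (ii) tuning $c$ slightly above $5$ ensures the coefficient in front of $1/x$ inside the exponential sits at $5$ exactly while the $x$-coefficient in the negative exponent remains $1$. Once the quantitative bound is established, the ``in particular'' clause is a short calculation: taking logarithms in $8(d+1)e^{5/x-xd}\le\varepsilon$ reduces the requirement to $xd\ge 5/x+\log(8(d+1)/\varepsilon)$, and since the $\log(d)$ term is subdominant this is implied by $d\ge (c/x^{2})\log(1/(x\varepsilon))$ for a sufficiently large universal constant $c$.
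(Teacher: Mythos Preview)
Your approach is genuinely different from the paper's. The paper uses the integral form of the Taylor remainder,
\[
R_d(x)=\frac{1}{d!}\int_{1}^{x} f^{(d+1)}(s)\,(x-s)^{d}\,ds,
\]
and applies Cauchy's estimates at each intermediate point $s\in[x,1]$ with a disc of radius $s-\tfrac{x}{2}$ (so the radius varies with $s$). This is also where the factor $(d+1)$ in the stated bound originates: it is simply $(d+1)!/d!$ coming from the Cauchy bound on $f^{(d+1)}$. You instead bound the tail $\sum_{k>d}$ of the Taylor series using Cauchy's estimates at the single center $1$ with a fixed radius $r$. Your route is conceptually cleaner, avoids the integral altogether, and is certainly enough for the ``in particular'' clause, where one only needs a bound of the shape $C\,e^{C/x-\beta x d}$ with some fixed $\beta>0$.

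There is, however, a concrete gap in your bookkeeping step (ii). With $r=1-x/c$ you obtain
\[
\frac{|x-1|}{r}=\frac{1-x}{1-x/c}\le \exp\!\Bigl(-\tfrac{c-1}{c}\,x\Bigr),
\]
so the coefficient you get on $-xd$ in the exponent is $(c-1)/c$, which is strictly less than $1$ for every finite $c$. No tuning of $c$ pushes this coefficient up to $1$ while keeping the $e^{c/x}$ factor under control: sending $c\to\infty$ to make $(c-1)/c\to 1$ also sends $e^{c/x}\to\infty$. Likewise, the prefactor your derivation produces is of order $1/x^{3}$, not $(d+1)$; you may absorb $1/x^{3}\le e^{3/x}$ into the exponential, but that does not manufacture a $(d+1)$. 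Thus your method yields a bound of the form $C\,e^{C/x-\beta x d}$ with some $\beta<1$, not the precise displayed inequality $8(d+1)e^{5/x-xd}$. This weaker bound is harmless for the application (only the second assertion of the lemma is used downstream), but you should either state the bound you can actually prove or switch to the integral-remainder argument with a moving Cauchy disc to recover the exact constants.
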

\begin{proof}
By the formula of the remainder term in Taylor series, we have that
\[
f(x)=\sum_{k=0}^{d}\frac{1}{k!}f^{(k)}(1)(x-1)^{k}+\frac{1}{d!}\int_{1}^{x}f^{(d+1)}(s)(x-s)^{d}ds.
\]
For any $s\in[x,1]$, we define $D(s)=\{z\in\mathbb{C}\ :\ \left|z-s\right|\leq s-\frac{x}{2}\}$.
Since $\left|f(z)\right|\leq(x/2)^{-2}\exp(2/x)$ on $z\in D(s)$,
Cauchy's estimates (Theorem \ref{thm:cauchy_estimate}) shows that
\[
\left|f^{(d+1)}(s)\right|\leq\frac{(d+1)!}{(s-\frac{x}{2})^{d+1}}\sup_{z\in B(s)}\left|f(z)\right|\leq\frac{(d+1)!}{(s-\frac{x}{2})^{d+1}}\frac{4}{x^{2}}\exp\left(\frac{2}{x}\right).
\]
Hence, we have that
\begin{align*}
\left|f(x)-\sum_{k=0}^{d}\frac{1}{k!}f^{(k)}(t)(x-t)^{k}\right|\leq & \frac{1}{d!}\left|\int_{1}^{x}\frac{(d+1)!}{(s-\frac{x}{2})^{d+1}}\frac{4}{x^{2}}\exp\left(\frac{2}{x}\right)(x-s)^{d}\mathrm{d}s\right|\\
= & \frac{4(d+1)\mathrm{e}^{\frac{2}{x}}}{x^{2}}\int_{x}^{1}\frac{(s-x)^{d}}{(s-\frac{x}{2})^{d+1}}\mathrm{d}s\\
\leq & \frac{8(d+1)\mathrm{e}^{\frac{2}{x}}}{x^{3}}\int_{x}^{1}\left(1-x\right)^{d}\mathrm{d}s\\
\leq & 8(d+1)\cdot \mathrm{e}^{\frac{5}{x}-xd}.
\end{align*}
\end{proof}

\section*{Acknowledgement}
The authors  would like to thank Michael Cohen for helpful discussions and suggesting the ideas to improve the sparsity from $O(n/\eps^3)$ to $O(n/\eps^2)$.

\bibliography{reference}

\appendix

\end{document}